\renewcommand{\subsectionmark}[1]{}
\newlength\Li \newlength\Lii
\newcommand{\vdim}{\mbox{{dim}}}
\newcommand{\esssup}{\mathop{\mathrm{ess\,sup}}}
\date{}
\def\@seccntformat#1{}
\numberwithin{equation}{section}
\def\blfootnote{\xdef\@thefnmark{}\@footnotetext}
\title{Existence and Uniqueness of Strong Solutions for a Compressible Multiphase Navier-Stokes Miscible Fluid-Flow Problem in Dimension $n=1$} 
\author{C.~Michoski\textsuperscript{\textdagger} \ \& \ A.~Vasseur\textsuperscript{*} \\ \\ \small{Departments of Mathematics, Chemistry and Biochemistry} \\ \small{University of Texas at Austin}}
\begin{document}\maketitle\begin{abstract}
We prove the global existence and uniqueness of strong solutions for
a compressible multifluid described by the barotropic Navier-Stokes equations in $\vdim =1$. The result holds when the diffusion coefficient depends on the pressure. It relies on a global control in time of the $L^2$ norm of the space derivative of the density, via a new kind of entropy.
\end{abstract}

\blfootnote{\textsuperscript\textdagger {\it michoski@cm.utexas.edu}, Department of Chemistry and Biochemistry}\blfootnote{*{\it vasseur@math.utexas.edu}, Department of Mathematics}

\section{\protect\centering $\S 1$ Introduction}

In this paper we show the well-posedness of a global strong solution to a multifluid problem over $\mathbb{R}^{+}\times\mathbb{R}$ characterized by the one-dimensional compressible barotropic Navier-Stokes equations.  That is, we consider the following system of equations, \begin{align} \label{rrr}\partial_{t}\rho+\partial_{x}&(\rho u)=0,\\\label{sss}\partial_{t}(\rho u)+\partial_{x}(\rho u^{2})+\partial_{x}p(\rho,&\mu )-\partial_{x}(\nu(\rho,\mu)\partial_{x}u)=0, \\ \label{ttt}\partial_{t}(\rho\mu)+\partial_{x}&(\rho u\mu)=0, \end{align} with initial conditions given by: $$\rho_{|t=0}=\rho_{0}>0,\qquad\rho u_{|t=0}=m_{0},\qquad\mu_{|t=0}=\mu_{0}.$$  The conservation of mass (\ref{rrr}), conservation of momentum (\ref{sss}) and conservation of species (\ref{ttt}) describe the flow of a barotropic compressible viscous fluid defined for $(t,x)\in \mathbb{R}^{+}\times\mathbb{R}$.  Here the \emph{density} is given as $\rho$, the \emph{velocity} as $u$, the \emph{momentum} as $m$, and the \emph{mass fraction} $\mu$ denotes the relative weighting for each fluid component of the \emph{adiabatic exponent} $\gamma(\mu)\in \mathbb{R}$ associated to the generalized \emph{pressure} $p(\rho,\mu)$, thus effectively tracking the ``mixing'' of the fluid components.  In one dimension the shear viscosity and the bulk viscosity collapse into a single coefficient function depending on $\rho$ and $\mu$ which we denote here $\nu(\rho,\mu)$.     
Monofluid one-dimensional compressible Navier-Stokes equations have been studied by many authors when the viscosity coefficient $\nu$ is a positive constant. The existence of weak solutions was first established by A.~Kazhikhov and V.~Shelukhin \cite{KS} for smooth enough data close to the equilibrium (bounded away from zero). The case of discontinuous data (still bounded away from zero) was addressed by V.~Shelukhin \cite{Sh1,Sh2} and then by D.~Serre \cite{Serre1,Serre2} and D.~Hoff \cite{Hoff1}. First results concerning vanishing intial density were also obtained by V.~Shelukhin \cite{Sh4}.  In \cite{Hoff4}, D.~Hoff proved the existence of global weak solutions with large discontinuous initial data, possibly having different limits at $x=\pm\infty$. He proved moreover that the constructed solutions have strictly positive densities (vacuum states cannot form in finite time). In dimension greater than two, similar results were obtained by A.~Matsumura and T.~Nishida \cite{MN1,MN2,MN3} for smooth data and D.~Hoff \cite{Hoff2} for discontinuous data close to the equilibrium. The first global existence result for initial density that are allowed to vanish was due to P.-L.~Lions (see \cite{PLL2}).  The result was later improved by B.~Desjardins (\cite{BD5}) and E.~Feireisl et al. (\cite{CHT, F1, F2, F3}). The class of solutions was then extended by A.~Zlotnik, G.-Q.~Chen, D.~Hoff, B.~Ducomet, and K.~Trivisa in \cite{CHT2, CT, DZ, KT, Z1, ZD} to the case of a thermally active compressible flow coupled by the systems chemical kinetics, where global existence results are shown for an Arrhenius type biphasic combustion reaction tracking only the reactants level of consumption.  Y.~Amirat and V.~Shelukin have further provided \cite{AS} weak solutions for the case of a miscible flow in porous media.

The problem of regularity and uniqueness  of solutions was first analyzed by V.~Solonnikov \cite{S1} for smooth initial data and for small time. However, the regularity may blow-up as the solution gets close to vacuum. This leads to another interesting question of whether vacuum may arise in finite time. D. Hoff and J.~Smoller (\cite{HS}) show that any weak solution of the Navier-Stokes equations in one space dimension do not exhibit vacuum states, provided that no vacuum states are present initially.

Interfacial multicomponent flows have been extensively studied in the literature, and span a rich array of applied topics with natural analogues in continuum dynamics.  For example, there has been numerous work on multicomponent flows in biological systems, including bifurcating vascular flows \cite{DEP, FW} and pulsatile hemodynamics \cite{JLPH}, \emph{in vitro} tissue growth \cite{LKBJS} and the ``am\oe boid motion'' of cells by way of surface polymerization \cite{Evans}, chemotactic transport \cite{F5} in aqueous media under chemical mixing (e.g. varying relative concentrations) applied to specialized cell types \cite{Eisenbach}, as well as biological membrane dynamics due to local gradients in surface tension caused by flux in local boundary densities \cite{SIG}.  Another important and popular field of application is that of dispersed nanoparticles in colloidal media (e.g. aerosols, emulsifications, sols, foams, etc.) \cite{Mayer, MV2, TGS} applied to, for example, electrospray techniques in designing solar cells \cite{FTKH}, or more generally to diagnostic and flow analysis in the applied material sciences \cite{BMR, SEV}.  In addition, phase separation and spinodal decomposition have received a great deal of attention \cite{Binder, CH, LWZ}, especially with respect to morphological engineering \cite{HJ}.   Another field which is heavily weighted with multifluid applications is that of combustion dynamics \cite{Williams} and chemical kinetics \cite{Gardiner, Houston}, where the conservation of species equation (\ref{ttt}) is regularly invoked, including numerous topics in reaction diffusion dynamics and phase mixing, spanning many essential topics in the atmospheric \cite{DS, HG} and geophysical \cite{Pedlosky, Vallis} sciences.  In electrochemistry and chemical engineering recent work has been done on porous multiphase fuel cells \cite{PSW, PW}, and in sonochemistry recent studies have shown acoustically induced transport properties across interfacial phase changes \cite{DN, MS}.  Finally in the fields of astronomy and astrophysics exotic multicomponent magnetohydrodynamic plasmas are studied \cite{MRFJ, OD}.       

Many applied results exist for computational methods and schemes for solving multicomponent flows.  Let us briefly mention some notable examples.  An early generalized numerical approach in multiphase modeling was presented by F.~Harlow and A.~Amsden in \cite{HA}, which provides an extensive system of dynamically coupled phases using a conservation of species (\ref{ttt}) equation obeying a number of relevant physical boundary conditions and which applies, in particular, to compressible flows.  In \cite{Dukowicz} J.~Dukowicz implements a particle-fluid model for incompressible sprays, an approach extended by G.~Faeth in \cite{Faeth, Faeth2} to combustion flows.  D.~Youngs then, in \cite{Youngs}, expanded numerical mixing regimes to include interfacial turbulent effects.  These basic schemes and approaches have been applied by a large number of authors to a large number of fields, modeling an extremely diverse number of natural phenomenon, from star formation \cite{HSWK} to volcanic eruptions \cite{OCENS}.  Some good reviews of the foundational numerics of these approaches can be found in the books of C.~Hirsch \cite{Hirsch}, P.~Shih-I and L. Shijun \cite{PL} and M.~Feistauer, J.~Felcman, and I.~Stra{\v{s}}kraba \cite{FFS}.       

Let us briefly outline the physical meaning of the subject of this paper, namely, the system of equations (\ref{rrr})-(\ref{ttt}).  Here we have a barotropic system with the flow driven by a pressure $p$ that depends on the density $\rho$ and the mass fraction $\mu$ of each chemical/phase component of the system.  Since the function $\gamma(\mu)$ depends on the constant heat capacity ratios $\gamma_{i}>1$ of each component of the multifluid, the pressure $p(\rho,\mu)$ effectively traces the thermodynamic ``signature'' of mixing chemicals/phases in solution.  Note that this is very similar, for example, to the system of equations set out in \cite{Williams}, except here, for simplicity, we have neglected the associated diffusion and chemical kinetics which break the strict (and mathematically convenient) conservation in the species equation (\ref{ttt}).  Another important facet of the system (\ref{rrr})-(\ref{ttt}) is that the viscosity $\nu$ is a function of the pressure $p$.  Much recent work has been done by M.~Franta, M.~Bul{\'{\i}}{\v{c}}ek, J.~M{\'a}lek, and K.~Rajagopal in providing results on these type of viscosity laws \cite{BMR2, FMR, MR1, MR2}.  Moreover, since the form of the pressure $p$ is chosen up to any state equation that satisfies the assumptions given in \textsection{2}, the formulation is general enough to include, for example, multi-nuclear regimes.  That is, in addition to describing the flow of mixing fluids characterized by their concentrations with respect to their heat capacity ratios, this construction also educes applications in nuclear hydrodynamics, where one can derive the pressure law using the time-dependent \emph{Hartee-Fock approximation} \cite{WMW}.  Such a \emph{nuclear fluid} obeying the assumptions given by the \emph{Eddington Standard Model} for stellar phenomena has a pressure law \cite{F1} that takes the form, $p(\rho)=C_{1}\rho^{3}-C_{2}\rho^{2}+C_{3}\rho^{7/4}$ where $C_{1},C_{2}$ and $C_{3}$ are positive constants.   In particular, this exotic pressure law can be shown to model nontrivial physical phenomena; such as spin and isospin wavefront propogation in nuclear fluids.  It has further been shown to be in good agreement with nuclear hydrodynamic models of the sun \cite{F1}.  Thus the result in \textsection{2} allows us to extend the above to \emph{nuclear multifluids} that satisfies $$p(\rho,\mu)= C_{1}\rho^{\gamma_{i}(\mu)}-C_{2}\rho^{\gamma_{j}(\mu)}+C_{3}\rho^{\gamma_{k}(\mu)},$$ as long as it verifies the conditions given in \textsection{2}.  It however remains to be seen if quantum multi-molecular fluids \cite{Wyatt} have an analogous formulation. 

At the level of the mathematical results, incompressible (for general background on the incompressible Navier-Stokes equations see \cite{PLL1}) multicomponent flows have been addressed by a number of authors.  First S.~Antontsev and A.~Kazhikhov in \cite{AK}, A.~Kazhikhov in \cite{Kazhikhov}, S.~Antontsev, A.~Kazhikhov and V.~Monakhov in \cite{AKM}, and B.~Desjardins in \cite{BD5} show results for mixing flows where homogenization of the density $\rho$ is allowed.  These solutions can be seen in contrast with P.L.~Lions' and R.~DiPerna's solutions in \cite{DL, NP1} which provide a multiphase solution for immiscible inhomogenizable flows given discrete constant densities for each component.  A.~Nouri, F.~Poupaud and Y.~Demay \cite{NP1, NP2} extend these results to functional densities where boundary components $\partial\Omega_{i}$ are set between each fluid domain $\Omega_{i}$ that satisfy the so-called \emph{kinematic condition}, which restricts the viscosity $\nu$ to obey $\partial_{t}\nu+u\cdot\nabla\nu=0$ (see \cite{NS} for further discussion on the kinematic condition).  These results apply to immiscible flows with boundary surfaces that effectively fix the number of fluid particles on the interface.  These solutions were then further extended by N.~Tanaka \cite{T1,T2}, V.~Solonnikov and A.~Tani \cite{S2, ST2, S3, S4, S5} to include boundary conditions tracking both the surface tension at the interface using a mean curvature flow on the interfacial surface, as well as the inclusion of self-gravitating parcels.

In this paper we consider viscosity coefficients depending on the pressure satisfying a barotropic-type pressure law, a result based upon the paper of A.~Mellet and A.~Vasseur in \cite{MV1} and extended to the multifluid case with a viscosity functional $\nu(p)$ given no \emph{a priori} uniform bound from below.  Thus, in addition to modeling the miscible multiflow regimes that have generated substantial physical interest (see above), our result further incorporates a very inclusive form of the generalized viscosity.  We show the global existence with uniqueness result for a one-dimensional compressible barotropic multicomponent Navier-Stokes problem.  In order to aquire the existence result, we rely heavily on an energy inequality provided by D.~Bresch and B.~Desjardins (see for example \cite{BD0} and \cite{BD2}).  This beautiful and powerful tool is central to our result, and, as it turns out, the breakdown of this calculation is the only (known) obstruction to acquiring similar results in dimension greater than one.  Next we obtain the uniqueness result by adapting a proof of Solonnikov's \cite{S1} to the case of the barotropic system (\ref{rrr})-(\ref{sss}) coupled to the species conservation equation (\ref{ttt}). 

Let us take this opportunity to discuss difficulties and related systems of equations in higher dimension.  Again, the present result relies heavily on the calculation of an energy inequality (see \textsection{3}) as provided by D.~Bresch and B.~Desjardins \cite{BD4, BD1, BDL}.  However, in dimension $n\geq 2$, the derivation of this entropy inequality leads to an unnatural form of the viscosity coefficiant $\nu(\rho,\mu)$; which is to say, the calculation no longer demonstrates the type of symmetry which leads to the essential \emph{cancellation} of \emph{singularities} (for example see \cite{Hoff4}) required in the calculation (see \cite{MV3} for the monofluid case). 

We briefly recall some exciting results known for compressible fluids in higher dimension, and note that extending these to multifluid regimes introduces both beautiful and difficult mathematics, while also addressing very important and physically relevant questions in the applied fields.  For example, a result of A.~Valli and W.~Zajaczkowski \cite{VZ} shows global weak solutions to the mutlidimensional problem for a heat conducting fluid with inflow and outflow conditions on the boundary.  In \cite{ST} A.~Solonnikov and A.~Tani offer a uniqueness proof for an isentropic compressible problem given a free boundary in the presence of surface tension. D.~Hoff and E.~Tsyganov next provide a very nice extension of the system to find weak solutions to the compressible magnetohydrodynamics regime in \cite{HT}.  G.~Chen and M.~Kratka in \cite{ChenK} further show a free boundary result for a heat-conducting flow given spherically symmetric initial data and a constant viscosity coefficient in higher dimension; a result which is extended by E.~Feireisl's work \cite{F1} under the notion of the \emph{variational solution} for heat-conducting flows in multiple dimensions; though this result restricts the form of the equation of state.  Further existence results are provided by B.~Ducomet and E.~Feireisl \cite{DF, DF2} for gaseous stars and the compressible heat-conducting magnetohydrodynamic regime.  Further, D.~Donatelli and K.~Trivisa \cite{DT2, DT} have extended the existence results for the coupled chemical kinetics system mentioned above to higher dimensions.  In \cite{MV2} A.~Mellet and A.~Vasseur provide global weak solutions for a compressible barotropic regime coupled to the Vlasov-Fokker-Planck equation, which characterizes the evolution of dispersed particles in compressible fluids, such as with spray phenomenon.  Finally, important results of D.~Bresch and B.~Desjardins, in a very recent paper \cite{BD3}, has worked to extend the existence results to a more general framework (using their energy inequality) for a viscous compressible heat-conducting fluid.

We conclude by noting a number of important and interesting results related to vacuum solutions.  That is, though in this work we are concerned with densities that obey uniform bounds in $\mathbb{R}$, a number of nice results exist for the case where over some open $U\subset\mathbb{R}$, $$\int_{U}\rho_{0}dx\geq 0;$$ which is to say, solutions that incorporate vacuum states.  For example, T.~Yang and C.~Zhu show in \cite{YZ} global existence for a 1D isentropic fluid connected continuiously to a vacuum state boundary with a density dependent viscosity.  Additionally, in dimension one, a recent result by C.~Cho and H.~Kim \cite{CK} provides unique strong local solutions to a viscous polytropic fluid, where they utilize a compatibility condition on the initial data.

\section{\protect\centering $\S 2$ Statement of Result}

Let us first state the hypothesis we make on the pressure and viscosity functional $p(\rho,\mu)$ and $\nu(\rho,\mu)$. First we assume that the pressure $p(\rho,\mu)$ is an increasing function of the density $\rho$ such that a.e., \begin{equation}\label{rhoincrease}\partial_{\rho}p(\rho,\mu)\geq 0.\end{equation}

The viscosity coefficient  $\nu(\rho,\mu)$ is chosen such that it satisfies the following relation, \begin{equation}\label{visc}\nu(\rho,\mu)=\rho\partial_{\rho}p(\rho,\mu)\psi'(p(\rho,\mu)),\end{equation} where $\psi(p)$ is a function of the pressure restricted only by the form of its derivative in $p$.

We consider a multifluid for which the pressure functional does not change too much with respect to the fractional mass. Namely, Consider two $\check{\gamma}>1$ and $\hat{\gamma}>1$, where $\check{\gamma}<\gamma<\hat{\gamma}$ up to the constraint that, \begin{align}\label{gamma1} & \ \ \ \frac{\hat{\gamma}-1/2}{\check{\gamma}}<\frac{\check{\gamma}+1/2}{\hat{\gamma}},\\ \label{gamma2}&\frac{\check{\gamma}-1/2}{\hat{\gamma}}>\frac{\hat{\gamma}+1/2}{\check{\gamma}}-1.\end{align}  These relations are satisfied when $\hat{\gamma}=1.4$ and $\check{\gamma}=1.3$, for example. 

Then, we ascribe the existence of constants $C\geq 0$ such that the following conditions hold: \begin{equation}\begin{aligned}\label{rho2}\psi'(p)\geq C&\sup(p^{-\underline{\alpha}},p^{-\overline{\alpha}}),\\\rho^{\check{\gamma}}/C\leq p(\rho,\mu)\leq C&\rho^{\hat{\gamma}}\quad\mathrm{for}\quad\rho \geq 1,\mu\in\mathbb{R},\\ \rho^{\hat{\gamma}}/C\leq p(\rho,\mu)\leq C&\rho^{\check{\gamma}}\quad\mathrm{for}\quad\rho\leq 1,\mu\in\mathbb{R},\end{aligned}\end{equation} where $\underline{\alpha}$ and $\overline{\alpha}$ are such that \begin{align}\label{gamma3} & \ \ \ \frac{\hat{\gamma}-1/2}{\check{\gamma}}<\underline{\alpha}\leq\frac{\check{\gamma}+1/2}{\hat{\gamma}},\\ \label{gamma4}&\frac{\check{\gamma}-1/2}{\hat{\gamma}}>\overline{\alpha}\geq\frac{\hat{\gamma}+1/2}{\check{\gamma}}-1.\end{align}  Note that the existence of $\underline{\alpha}$ and $\overline{\alpha}$ comes from (\ref{gamma1}) and (\ref{gamma2}).

Next we set conditions on the derivatives of the pressure in $\rho$ and $\mu$; given first in $\rho$ by, \begin{equation}\begin{aligned}\label{rhopb}\rho^{\check{\gamma}-1}/C& \leq \partial_{\rho}p(\rho,\mu) \leq C\rho^{\hat{\gamma} -1} \quad \mathrm{for} \quad \rho\geq 1,\mu\in\mathbb{R},\\ \rho^{\hat{\gamma}-1}/C &\leq\partial_{\rho}p(\rho,\mu)\leq C\rho^{\check{\gamma} -1}\quad\mathrm{for}\quad\rho\leq 1,\mu\in\mathbb{R},\end{aligned}\end{equation} and in $\mu$ by, \begin{equation}\begin{aligned}\label{rhomub}&\partial_{\mu}p(\rho,\mu) \leq C\rho^{\hat{\gamma}} \quad\mathrm{for} \quad \rho\geq 1,\mu\in\mathbb{R},\\ &\partial_{\mu}p(\rho,\mu) \leq C\rho^{\check{\gamma}} \quad\mathrm{for} \quad \rho\leq 1,\mu\in\mathbb{R}.\end{aligned}\end{equation}  Notice that a simple pressure which satisfies these conditions is, $p(\rho,\mu)=C(\mu)\rho^{\gamma(\mu)}$ where $1/C\leq C(\mu)\leq C$ and with two constants $\gamma_{1}$ and $\gamma_{2}$ such that, $$\check{\gamma}<\gamma_{1}\leq \gamma(\mu) \leq \gamma_{2} < \hat{\gamma}.$$  In particular note that (\ref{rhoincrease}), (\ref{visc}), (\ref{rho2}), and (\ref{rhopb})-(\ref{rhomub}) are quite general assumptions, while the strong conditions, (\ref{gamma1}) and (\ref{gamma2}), have the effect of constraining the amount $p(\rho,\mu)$ can change with respect to $\mu$.

For the sake of clarity we define $\dot{H}^{1}(\mathbb{R})$ as the space consisting of all functions $\rho$ for which, $$\int_{\mathbb{R}}(\partial_{x}\rho)^{2}dx\leq C.$$  This paper is dedicated to the proof of the following theorem.

\newtheorem*{thpre}{Theorem}
\begin{thpre} Assume a pressure $p(\rho,\mu)$ and viscosity $\nu(\rho,\mu)$ satisfying (\ref{visc}), (\ref{rho2}), and (\ref{rhopb})-(\ref{rhomub}) where the adiabatic limits $\hat{\gamma}$ and $\check{\gamma}$ verify (\ref{gamma1})-(\ref{gamma2}), and take initial data $(\rho_{0}, u_{0},\mu_{0})$ for which there exists positive constants $\overline{\varrho}(0)$ and $\underline{\varrho}(0)$ such that $$\begin{aligned}  0 < \underline{\varrho}(0)\leq & \rho_{0} \leq \overline{\varrho}(0)< \infty, \\ \rho_{0}\in\dot{H}^{1}(\mathbb{R}), \quad u_{0} \in & H^{1}(\mathbb{R}), \quad \mu_{0} \in H^{1}(\mathbb{R}), \\ \int_{\Omega}\mathscr{E}(\rho_{0},& \mu_{0})dx< +\infty, \\ |\partial_{x}&\mu_{0}|\leq C\rho_{0},\end{aligned}$$ where $\mathscr{E}$ is the internal energy as defined in (\ref{genenerg}).  We additionally assume the existence of constants $R,S>0$ and $\tilde{\rho},\tilde{\mu}>0$ where $\rho_{0}\equiv\tilde{\rho}$ for $|x|>R$ and $\mu_{0}\equiv\tilde{\mu}$ for $|x|>S$.  Then there exists a global strong solution to (\ref{rrr})-(\ref{ttt}) on $\mathbb{R}^{+}\times\mathbb{R}$ such that for every $T>0$ we have $$\begin{aligned} \rho \in L^{\infty}(0,T; \dot{H}^{1}(\mathbb{R})), & \quad \partial_{t}\rho \in L^{2}((0,T)\times \mathbb{R}), \\ u \in L^{\infty}(0,T; H^{1} (\mathbb{R}))\cap L^{2} (0,T & ;H^{2}(\mathbb{R})), \ \partial_{t}u \in L^{2}((0,T)\times\mathbb{R}), \\ \mu_{x} \in L^{\infty}(0,T; L^{\infty}(\mathbb{R})), & \quad \partial_{t}\mu \in L^{\infty}(0,T;L^{2}(\mathbb{R})).\end{aligned}$$  Furthermore, there exist positive constants $\underline{\varrho}(T)$ and $\overline{\varrho}(T)$ depending only on $T$, such that $$0< \underline{\varrho}(T) \leq \rho(t,x) \leq \overline{\varrho}(T)< \infty, \quad\quad \forall (t,x)\in (0,T)\times\mathbb{R}.$$  Additionally, when $\psi''(p)$, $\partial_{\rho\rho}p(\rho,\mu)$, and $\partial_{\rho\mu}p(\rho,\mu)$ are each locally bounded then this solution is unique in the class of weak solutions satisfying the entropy inequalities of \textsection{3}. 
\end{thpre}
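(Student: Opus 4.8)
The plan is to obtain existence by the usual three-step scheme --- construct approximate solutions, derive a priori bounds uniform in the regularization, and pass to the limit by compactness --- and to treat uniqueness separately through a Gronwall estimate on the difference of two solutions, adapting Solonnikov \cite{S1}. Every bound rests on two estimates: the \emph{classical energy inequality} and the \emph{Bresch--Desjardins (BD) entropy} of \cite{BD0,BD2}, and the structural hypothesis (\ref{visc}) is imposed precisely so that the latter closes.

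First I would record the classical energy estimate by testing (\ref{sss}) with $u$ and using (\ref{rrr}): this controls $\sup_{t}\int(\tfrac12\rho u^{2}+\mathscr{E}(\rho,\mu))\,dx$ and the dissipation $\int_{0}^{T}\!\int\nu(\rho,\mu)(\partial_{x}u)^{2}$, but says nothing about $\partial_{x}\rho$. The heart of the argument is therefore the BD entropy. I would test the momentum equation against an effective velocity $u+\partial_{x}\varphi$ with $\varphi'(\rho)=\nu(\rho,\mu)/\rho^{2}$; by (\ref{visc}) this gives $\varphi'(\rho)=\partial_{\rho}p\,\psi'(p)/\rho$, so the BD manipulation produces simultaneously a bound on $\sup_{t}\int\rho|\partial_{x}\varphi|^{2}=\sup_{t}\int\rho^{-1}(\partial_{\rho}p)^{2}\psi'(p)^{2}|\partial_{x}\rho|^{2}$ and on the dissipation $\int_{0}^{T}\!\int(\partial_{\rho}p)^{2}\psi'(p)|\partial_{x}\rho|^{2}$. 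This weighted control of $\partial_{x}\rho$ is the ``new entropy'' advertised in the abstract, and it is the sole reason the viscosity is posited in the form (\ref{visc}). To convert it into the unweighted statement $\rho\in L^{\infty}(0,T;\dot H^{1})$ I must bound the weight $\rho^{-1}(\partial_{\rho}p)^{2}\psi'(p)^{2}$ below by a constant, uniformly for $\rho\le1$ and $\rho\ge1$ and across the whole range $\check\gamma<\gamma(\mu)<\hat\gamma$; this is carried out by inserting the lower bounds on $\psi'$ in (\ref{rho2}) and on $\partial_{\rho}p$ in (\ref{rhopb}) and balancing the resulting powers of $\rho$. It is exactly this balancing that consumes the exponent constraints (\ref{gamma1})--(\ref{gamma4}), and the additional species contribution coming from $\partial_{x}p=\partial_{\rho}p\,\partial_{x}\rho+\partial_{\mu}p\,\partial_{x}\mu$, controlled by (\ref{rhomub}) together with the propagated bound $|\partial_{x}\mu|\le C\rho$, must be absorbed at the same time. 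I expect this weight-balancing step to be the principal obstacle --- it is the step whose symmetry, by the remarks in the introduction, breaks down in dimension $n\ge2$.

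With the entropy in hand I would extract the pointwise density bounds. Since $\rho\equiv\tilde\rho$ outside $|x|>R$ and $\rho\in L^{\infty}(0,T;\dot H^{1})$, the one-dimensional embedding $\dot H^{1}\hookrightarrow L^{\infty}$ gives the \emph{upper} bound $\rho\le\overline\varrho(T)$ at once. The \emph{lower} bound (no vacuum) is the delicate point, since $\nu$ carries no uniform positive lower bound; I would obtain it by controlling $\varphi(\rho)$ --- which encodes $1/\rho$ through $\varphi'(\rho)=\partial_{\rho}p\,\psi'(p)/\rho$ --- in an $L^{\infty}$-in-space norm via the BD quantity $\sup_{t}\int\rho|\partial_{x}\varphi|^{2}$, equivalently by tracking the effective viscous flux $\nu\partial_{x}u-p$ along Lagrangian characteristics in the manner of the monofluid theory, so that $\rho$ cannot reach zero in finite time. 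Once $0<\underline\varrho(T)\le\rho\le\overline\varrho(T)$ is secured, the momentum equation is a uniformly parabolic equation for $u$ with bounded coefficients, and a routine bootstrap yields $u\in L^{2}(0,T;H^{2})$ and $\partial_{t}u\in L^{2}$, while $\partial_{t}\rho=-\partial_{x}(\rho u)\in L^{2}$. For the species variable, combining (\ref{ttt}) with (\ref{rrr}) shows $\mu$ solves the transport equation $\partial_{t}\mu+u\partial_{x}\mu=0$; differentiating in $x$ gives $\partial_{t}\mu_{x}+u\partial_{x}\mu_{x}=-(\partial_{x}u)\mu_{x}$, so a Gronwall estimate along characteristics propagates $|\partial_{x}\mu_{0}|\le C\rho_{0}$ and yields $\mu_{x}\in L^{\infty}(0,T;L^{\infty})$, using that $u\in L^{2}(0,T;H^{2})\hookrightarrow L^{1}(0,T;W^{1,\infty})$ in one dimension; then $\partial_{t}\mu=-u\partial_{x}\mu\in L^{\infty}(0,T;L^{2})$. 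Passing all these uniform bounds through the approximation scheme by weak-$*$ and (via Aubin--Lions, for the strong convergence of $\rho$ needed to identify the nonlinear pressure and viscosity) strong compactness completes the existence proof.

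Finally, for uniqueness I would take two strong solutions sharing the same data, write the equations for the differences $\delta\rho,\delta u,\delta\mu$, and close an $L^{2}$-energy estimate by Gronwall. The differences of the nonlinear terms $p(\rho,\mu)$, $\nu(\rho,\mu)$ and $\psi'(p)$ are treated by the mean value theorem, and this is exactly where the extra hypotheses that $\psi''(p)$, $\partial_{\rho\rho}p$ and $\partial_{\rho\mu}p$ be locally bounded enter: combined with the two-sided density bounds and the $L^{\infty}$ control of $\mu_{x}$ already established, they render each nonlinearity Lipschitz on the relevant compact range, so the error terms are dominated by the dissipation plus a Gronwall-integrable remainder. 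Once the entropy estimate and the density bounds of the preceding paragraphs are in place, both the regularity bootstrap and the uniqueness argument are, by comparison, routine; the genuine difficulty of the theorem is concentrated in the BD weight-balancing step.
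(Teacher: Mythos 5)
Your global architecture (classical energy $+$ Bresch--Desjardins entropy $\Rightarrow$ density bounds $\Rightarrow$ parabolic bootstrap for $u$ $\Rightarrow$ Gronwall uniqueness) is the paper's, and you correctly identify the conversion of the weighted BD control of $\partial_{x}\rho$ into unweighted bounds as the crux. But your execution of that crux fails, and it fails circularly. You assert that the weight $\rho^{-1}(\partial_{\rho}p)^{2}\psi'(p)^{2}$ multiplying $|\partial_{x}\rho|^{2}$ is bounded below by a constant uniformly in $\rho$; inserting the lower bounds of (\ref{rho2}) and (\ref{rhopb}) for $\rho\geq 1$ gives only $\geq C\rho^{2\check{\gamma}-2\overline{\alpha}\hat{\gamma}-3}$, and since (\ref{gamma4}) forces $\overline{\alpha}\geq(\hat{\gamma}+1/2)/\check{\gamma}-1>0$ this exponent is negative (about $-1.7$ for $\hat{\gamma}=1.4$, $\check{\gamma}=1.3$), so the weight degenerates as $\rho\to\infty$ and the unweighted $\dot{H}^{1}$ bound does \emph{not} follow until the upper bound on $\rho$ is already in hand. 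Conversely, you propose to obtain that upper bound from $\rho\in L^{\infty}(0,T;\dot{H}^{1})$ via an embedding $\dot{H}^{1}(\mathbb{R})\hookrightarrow L^{\infty}(\mathbb{R})$ that does not exist: $\dot{H}^{1}$ gives only local H\"older-$1/2$ control, and the far-field condition $\rho_{0}\equiv\tilde{\rho}$ for $|x|>R$ is a hypothesis on the initial data only. A second circularity sits in your propagation of $|\partial_{x}\mu|\leq C\rho$ by Gronwall along characteristics using $u\in L^{1}(0,T;W^{1,\infty})$: that norm comes out of the parabolic bootstrap, which needs the density bounds, which (in your own scheme) need $|\partial_{x}\mu|\leq C\rho$.

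The missing ideas are precisely the paper's Lemmas 4.2--4.4. First, $|\rho^{-1}\partial_{x}\mu|\leq M$ is propagated \emph{exactly}, with no velocity regularity at all, because differentiating (\ref{ttt}) in $x$ shows $\partial_{x}\mu$ solves the same continuity-type equation as $\rho$, so the ratio $\rho^{-1}\partial_{x}\mu$ is transported and a renormalization argument conserves its $L^{\infty}$ bound. Second, the BD bounds are used not to control $\partial_{x}\rho$ itself but the derivatives of the calibrated nonlinear quantities $\rho^{-\eta}$ (for $\rho\leq1$, in $L^{2}_{loc}$) and $\rho^{\sigma}$ (for $\rho\geq1$, only in $L^{1}(I)$, after absorbing an excess power of $\rho$ into the internal energy via Lemma 4.3), with $\eta,\sigma>0$ supplied exactly by (\ref{gamma1})--(\ref{gamma4}). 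Third, the pointwise bounds then come from a covering argument (Lemma 4.2): the time-uniform integrability of $\mathscr{E}(\rho,\mu)$ forces, in every interval of fixed length $2K$, an anchor point where $\rho>\epsilon$ (resp.\ $\rho<\epsilon^{-1}$), and the fundamental theorem of calculus together with the $L^{1}(I)$ bounds on $\partial_{x}\xi(\rho)$ spreads this over the whole interval uniformly in $t$. Only \emph{after} Proposition 4.1 is $\rho\in L^{\infty}(0,T;\dot{H}^{1})$ deduced as a corollary. Your uniqueness argument and the regularity bootstrap for $u$ do match the paper once these bounds are available, but as written the core of your existence proof does not close.
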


It is worth remarking that our results are slightly stronger than those presented in the statement of the theorem above.  Namely, the conditions $\rho_{0}\equiv\tilde{\rho}$ for $|x|>R$ and $\mu_{0}\equiv\tilde{\mu}$ for $|x|>S$ with respect to constants $R$ and $S$ can be relaxed, such that simply choosing $\rho_{0}$ and $\mu_{0}$ close to the reference values $\tilde{\rho}$ and $\tilde{\mu}$ is permissible as long as the internal energy $\mathscr{E}(\rho,\mu)$ remains integrable at $t=0$.  

We additionally use the existence of the short-time solution to the system (\ref{rrr})-(\ref{ttt}), which follows from \cite{S1}.  That is, as we show explicitly in \textsection{4}, applying (\ref{rho2}) and (\ref{rhopb}) to (\ref{visc}) provides that for every $(\rho,\mu)$ the viscosity coefficient $\nu(\rho,\mu)\geq C$, for a positive constant $C$.  This leads to the following proposition.    

\newtheorem*{solon}{Proposition 2.1 ({\it Solonnikov})}
\begin{solon} For initial data $(\rho_{0}, u_{0}, \mu_{0})$ taken with respect to the positive constants $\overline{\varrho}(0)$ and $\underline{\varrho}(0)$ satisfying $$\begin{aligned} 0 <\underline{\varrho}(0)\leq & \rho_{0} \leq\overline{\varrho}(0) < \infty, \\ \rho_{0} \in \dot{H}^{1}(\mathbb{R}), \quad u_{0} \in & H^{1}(\mathbb{R}), \quad \mu_{0} \in H^{1}(\mathbb{R}), \end{aligned}$$ and assuming that $\nu(\rho,\mu)\geq C$ for a positive constant $C$, then there exists a $T_{s}>0$ for each such that (\ref{rrr})-(\ref{ttt}) has a unique solution $(\rho,u,\mu)$ on $(0,T_{s})$ for each $T_{r}<T_{s}$ satisfying $$\begin{aligned} & \ \rho\in  L^{\infty}(0,T_{r};\dot{H}^{1}(\mathbb{R})), \quad\quad \partial_{t}\rho \in L^{2}((0,T_{r})\times \mathbb{R}), \\ & \ u\in  L^{2}(0,T_{r};H^{2}(\mathbb{R})), \quad\quad \ \partial_{t}u\in L^{2}((0,T_{r})\times\mathbb{R}), \\  & \mu_{x} \in L^{\infty}(0,T_{r};L^{\infty}(\mathbb{R})), \quad \ \partial_{t}\mu\in L^{\infty}((0,T_{r});L^{2}(\mathbb{R}));\end{aligned}$$ and there exists two positive constants, $\underline{\varrho}_{r}>0$ and $\overline{\varrho}_{r}<\infty$, such that $\underline{\varrho}_{r}\leq \rho(x,t)\leq \overline{\varrho}_{r}$ for all $t\in (0,T_{s})$.
\end{solon}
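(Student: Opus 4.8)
The plan is to read Proposition~2.1 as a short-time well-posedness statement and to obtain it by a linearization together with a fixed-point argument, exactly in the spirit of Solonnikov \cite{S1}, the only genuinely new feature being the transported mass fraction $\mu$. The starting observation is that the hypothesis $\nu(\rho,\mu)\geq C>0$ makes the momentum balance (\ref{sss}) \emph{uniformly parabolic} in $u$. Indeed, using (\ref{rrr}) to expand (\ref{sss}) and (\ref{ttt}), one rewrites the system in the decoupled form
\begin{gather}
\partial_{t}\rho+u\,\partial_{x}\rho=-\rho\,\partial_{x}u,\qquad \partial_{t}\mu+u\,\partial_{x}\mu=0,\\
\rho\bigl(\partial_{t}u+u\,\partial_{x}u\bigr)-\partial_{x}\bigl(\nu(\rho,\mu)\,\partial_{x}u\bigr)=-\partial_{x}p(\rho,\mu),
\end{gather}
in which $\rho$ and $\mu$ solve linear transport equations driven by $u$, while $u$ solves a second-order parabolic equation whose coefficients are functions of $(\rho,\mu)$. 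I would then set up the iteration $u^{k}\mapsto u^{k+1}$: given $u^{k}$, solve the two transport equations for $\rho^{k+1}$ and $\mu^{k+1}$ by the method of characteristics, form the coefficients $\nu(\rho^{k+1},\mu^{k+1})$ and $p(\rho^{k+1},\mu^{k+1})$, and solve the resulting linear parabolic problem for $u^{k+1}$ with transport velocity $u^{k}$.

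The next step is to close uniform bounds on a short interval $[0,T_{s}]$. For the density I would use the characteristic representation $\rho(t,X(t;x))=\rho_{0}(x)\exp\!\bigl(-\int_{0}^{t}\partial_{x}u(s,X)\,ds\bigr)$; since $H^{2}(\mathbb{R})\hookrightarrow W^{1,\infty}(\mathbb{R})$ and $\int_{0}^{t}\|\partial_{x}u\|_{L^{\infty}}\leq\sqrt{t}\,\|u\|_{L^{2}(0,t;H^{2})}$, smallness of $T_{s}$ keeps $\rho$ pinched between the constants $\underline{\varrho}_{r}$ and $\overline{\varrho}_{r}$. For the velocity I would run the standard parabolic energy estimates: testing the $u$-equation against $\partial_{t}u$ and using $\nu\geq C$ together with the elliptic identity $\nu\,\partial_{xx}u=\partial_{x}(\nu\,\partial_{x}u)-\partial_{x}\nu\,\partial_{x}u$ yields $u\in L^{\infty}(0,T_{s};H^{1})\cap L^{2}(0,T_{s};H^{2})$ and $\partial_{t}u\in L^{2}$. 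Differentiating (\ref{rrr}) in $x$ and performing an $L^{2}$ estimate then gives $\tfrac{d}{dt}\|\partial_{x}\rho\|_{L^{2}}^{2}\lesssim\|\partial_{x}u\|_{L^{\infty}}\|\partial_{x}\rho\|_{L^{2}}^{2}+\|\rho\|_{L^{\infty}}\|\partial_{xx}u\|_{L^{2}}\|\partial_{x}\rho\|_{L^{2}}$, so Gr\"onwall produces $\rho\in L^{\infty}(0,T_{s};\dot{H}^{1})$ and hence $\partial_{t}\rho\in L^{2}$; the maximum principle for the $\mu_{x}$-equation, together with an $L^{2}$ estimate, delivers $\mu_{x}\in L^{\infty}(L^{\infty})\cap L^{\infty}(L^{2})$ and $\partial_{t}\mu\in L^{\infty}(L^{2})$.

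With these bounds in hand, I would estimate the differences of consecutive iterates in a weaker norm (an $L^{2}$-type metric for $u$ and for the transported quantities), showing that the map is a contraction on $[0,T_{s}]$ for $T_{s}$ small enough, and pass to the limit to obtain a solution in the stated class. Uniqueness I would prove directly by subtracting two solutions sharing the same data, running an $L^{2}$ energy estimate on the differences, and closing with Gr\"onwall; here the high regularity of the solutions and the local Lipschitz dependence of $p(\rho,\mu)$ and $\nu(\rho,\mu)$ on $(\rho,\mu)$ over the compact range $[\underline{\varrho}_{r},\overline{\varrho}_{r}]\times\mathbb{R}$ (guaranteed by (\ref{visc}), (\ref{rho2}) and (\ref{rhopb})) are exactly what is needed.

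The main obstacle I anticipate is the \emph{circular regularity coupling} between the transport-controlled $\rho$ and the parabolic $u$: the $\dot{H}^{1}$ bound on $\rho$ is available only through $\|\partial_{xx}u\|_{L^{2}}$, whereas the $H^{2}$ bound on $u$ is available only when the coefficient $\nu(\rho,\mu)$ lies in $H^{1}$, i.e.\ through $\partial_{x}\nu=\partial_{\rho}\nu\,\partial_{x}\rho+\partial_{\mu}\nu\,\partial_{x}\mu$, which in turn demands $\partial_{x}\rho,\partial_{x}\mu\in L^{2}$. Closing this loop requires arranging the estimates so the top-order terms enter linearly, absorbing the resulting constants by the smallness of $T_{s}$, and doing so while simultaneously maintaining the two-sided density bounds that keep the problem uniformly parabolic and keep $\nu,p$ Lipschitz. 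This delicate balance, together with the correct choice of iteration space, is the technical heart of the argument and is precisely the content imported from \cite{S1}.
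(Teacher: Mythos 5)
Your proposal is correct and follows essentially the same route as the paper, which simply defers to Solonnikov's linearization/fixed-point scheme in \cite{S1} and notes that the new species equation is handled by a Duhamel (characteristics) representation for the transport of $\mu$, exactly parallel to the treatment of the continuity equation; your sketch merely spells out the details of that scheme, including the circular $\rho$--$u$ regularity coupling that is the genuine technical content imported from \cite{S1}. The only caveat worth noting is that the asserted bound $\mu_{x}\in L^{\infty}(0,T_{r};L^{\infty}(\mathbb{R}))$ requires $\partial_{x}\mu_{0}\in L^{\infty}(\mathbb{R})$ (supplied in the main theorem by the hypothesis $|\partial_{x}\mu_{0}|\leq C\rho_{0}$), not merely $\mu_{0}\in H^{1}(\mathbb{R})$ --- an issue with the proposition's statement rather than with your argument.
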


The proof of Solonnikov's proposition 2.1 as presented in \cite{S1} follows with the addition of equation (\ref{ttt}) by applying Duhamel's principle to the transport equation in $\mu$ given the regularity which we demonstrate in \textsection{4}, in much the same way Duhamel's principle is applied to $\rho$ in \cite{S1} for the continuity equation.  The rest of the proof then pushes through directly by virtue of the calculation shown in \textsection{5} of this work.  

\section{\protect\centering $\S 3$ Energy Inequalities}

In this section we derive two inequalities in order to gain enough control over (\ref{rrr})-(\ref{ttt}) to prove the theorem.  That is, from these inequalities we obtain \emph{a priori} estimates that hold for smooth solutions and then prove the existence result using Solonnikov's short-time solution.  The first inequality we show is the classical entropy inequality adapted to the context of a multifluid, while the second is an additional energy inequality derived using a technique discovered by D.~Bresch and B.~Desjardins that effectively fixes the form of the viscosity coefficient $\nu(\rho,\mu)$.   

A simple calculation is required in order to obtain the classical entropy inequality (in the sense of \cite{BD2} and \cite{MV3}).  That is, multiplying the momentum equation (\ref{sss}) by $u$ and integrating we find, \begin{equation}\begin{aligned}\label{energy}\frac{d}{dt}\int_{\mathbb{R}}\Big\{\rho\frac{u^{2}}{2}+\mathscr{E}(\rho,\mu )\Big\}dx+\int_{\mathbb{R}}\nu(\rho,\mu)|\partial_{x}u|^{2}dx\leq 0.\end{aligned} \end{equation} Here $\mathscr{E}(\rho,\mu)$ is the \emph{internal energy} functional effectively tempered by a fixed constant reference density $\tilde{\rho}<\infty$ and a fixed constant reference mass fraction $\tilde{\mu}\leq C$, given by \begin{equation}\label{genenerg}\mathscr{E}(\rho,\mu)=\rho\int_{\tilde{\rho}}^{\rho} \bigg\{\frac{p(s,\mu)-p(\tilde{\rho},\mu)}{s^{2}}\bigg\}ds+p(\tilde{\rho},\tilde{\mu})-p(\tilde{\rho},\mu).\end{equation} 

Let us make this calculation precise.  First we restrict to the first two terms of (\ref{sss}) to notice that, $$\partial_{t}(\rho u)+\partial_{x}(\rho u^{2})=\rho\partial_{t}u+\rho u\partial_{x}u+u\left(\partial_{t}\rho +\partial_{x}(\rho u)\right),$$ where subsequently multiplying through by a factor of $u$ and integrating gives, $$\int_{\mathbb{R}}\bigg\{u^{2}\left(\partial_{t}\rho +\partial_{x}(\rho u)\right)+\frac{1}{2}\Big(\rho\partial_{t}u^{2}+\rho u\partial_{x}u^{2}\Big)\bigg\}dx.$$  This can be easily rewritten using (\ref{rrr}), as $$\begin{aligned}\int_{\mathbb{R}}\left(u\partial_{t}(\rho u)+u\partial_{x}(\rho u^{2})\right)dx=\frac{1}{2}\int_{\mathbb{R}}\Big\{\partial_{t}\left(\rho u^{2}\right)+\partial_{x}\left(\rho u^{3}\right)\Big\}dx.\end{aligned}$$

Likewise the pressure term $p_{x}$ from (\ref{sss}) is  multiplied through by a factor of $u$ and integrated.  In particular, the form this term takes in (\ref{energy}) is derived from a pressure $p(\rho,\mu)$ that satisfies a conservation law (shown in \textsection{4}) for a tempered internal energy $\mathscr{E}(\rho,\mu)$.  To see this, first notice that for any function of $\rho$ and $\mu$ we have, $$\begin{aligned}\label {timetrans}\int_{\mathbb{R}}\partial_{t}\mathscr{E}(\rho,\mu)dx &=\int_{\mathbb{R}}\partial_{\rho}\mathscr{E}(\rho,\mu)\partial_{t}\rho dx+\int_{\mathbb{R}}\partial_{\mu}\mathscr{E}(\rho,\mu)\partial_{t}\mu dx \\ & = -\int_{\mathbb{R}}u \partial_{\mu}\mathscr{E}(\rho,\mu) \partial_{x}\mu dx-\int_{\mathbb{R}}\partial_{\rho}\mathscr{E}(\rho,\mu)\partial_{x}(\rho u)dx \\ &=-\int_{\mathbb{R}}u\partial_{\mu}\mathscr{E}(\rho,\mu)\partial_{x}\mu dx-\int_{\mathbb{R}}\rho\partial_{\rho}\mathscr{E}(\rho,\mu)\partial_{x}u dx \\ & \quad\quad -\int_{\mathbb{R}}u\partial_{\rho}\mathscr{E}(\rho,\mu)\partial_{x}\rho dx.\end{aligned}$$  But here, since $\partial_{x}\mathscr{E}(\rho,\mu)=\partial_{\rho}\mathscr{E}\partial_{x}\rho +\partial_{\mu}\mathscr{E}\partial_{x}\mu$, we can write, $$\begin{aligned}\int_{\mathbb{R}}\partial_{t}\mathscr{E}(\rho,\mu)dx & =-\int_{\mathbb{R}}u\partial_{\mu}\mathscr{E}(\rho,\mu)\partial_{x}\mu dx -\int_{\mathbb{R}}\rho\partial_{\rho}\mathscr{E}(\rho,\mu)\partial_{x}u dx \\ & \quad \quad -\int_{\mathbb{R}}u\partial_{\rho}\mathscr{E}(\rho,\mu)\partial_{x}\rho dx \\ & =-\int_{\mathbb{R}}u\partial_{x} \mathscr{E}(\rho,\mu)dx-\int_{\mathbb{R}}\rho\partial_{\rho}\mathscr{E}(\rho,\mu)\partial_{x}u dx \\ &=\int_{\mathbb{R}}\partial_{x}u\Big\{\mathscr{E}(\rho,\mu)-\rho\partial_{\rho}\mathscr{E}(\rho,\mu)\Big\}dx,\end{aligned}$$ which gives, \begin{equation}\begin{aligned}\label{whynot}\int_{\mathbb{R}}\partial_{t}\mathscr{E}(\rho,\mu)dx & =\int_{\mathbb{R}}u\partial_{x}\Big\{\rho \partial_{\rho}\mathscr{E}(\rho,\mu)-\mathscr{E}(\rho,\mu)\Big\}dx.\end{aligned}\end{equation}  Using (\ref{genenerg}) we find \begin{equation}\label{internalproper}\mathscr{E}(\rho,\mu) = \rho\partial_{\rho}\mathscr{E}(\rho,\mu) + p(\tilde{\rho},\tilde{\mu})-p(\rho,\mu),\quad\mathrm{where}\quad \mathscr{E}(\tilde{\rho},\tilde{\mu})=0,\end{equation} such that computing $\rho\partial_{\rho}\mathscr{E}(\rho,\mu)-\mathscr{E}(\rho,\mu)$ arrives with the desired equality, \begin{equation}\label{temperpres}\frac{d}{dt}\int_{\mathbb{R}}\mathscr{E}(\rho,\mu)dx =\int_{\mathbb{R}}u\partial_{x}p(\rho,\mu)dx.\end{equation}

This internal energy $\mathscr{E}$ over $\mathbb{R}$ arises in \cite{Hoff4} for the single component case, where there $G(\rho,\rho')$ is set as the \emph{potential energy density} and treated in a similar fashion.  Note that as $\mathscr{E}(\rho,\mu)$ is tempered with respect to a reference density $\tilde{\rho}$ and a reference fractional mass $\tilde{\mu}$, this is all that is needed to control the sign on the internal energy $\mathscr{E}(\rho,\mu)$, with the only qualification coming from \textsection{3} which gives cases on the limits of integration.

It is further worth mentioning that the above terms comprise an entropy $\mathscr{S}(\rho, u,\mu)$ of the system (as well as the entropy term of inequality (\ref{energy})), where we write the integrable function, $$\mathscr{S}(\rho, u,\mu) =\frac{m^{2}}{2\rho}+\mathscr{E}(\rho,\mu).$$
 
The final step in recovering (\ref{energy}) is to calculate the remaining diffusion term, which follows directly upon integration by parts.  That is, after multiplying through by $u$ and integrating by parts we see that $$-\int_{\mathbb{R}}u \partial_{x} \big(\nu(\rho,\mu)\partial_{x}u\big)dx=\int_{\mathbb{R}}\nu(\rho,\mu)(\partial_{x}u)^{2}dx$$ which leads to the result; namely (\ref{energy}).

\subsection{3.1 Additional Energy Inequality}

The following lemma provides the second energy inequality that we use in order to prove the theorem.

\newtheorem*{l1vc}{Lemma 3.1}
\begin{l1vc}
For solutions of (\ref{rrr})-(\ref{ttt}) we have \begin{equation}\label{lemma1}\frac{d}{dt}\int_{\mathbb{R}}\Big\{\frac{\rho}{2}\big|u + \rho^{-1}\partial_{x}\psi(p)\big|^{2} + \mathscr{E}(\rho,\mu)\Big\}dx+\int_{\mathbb{R}}\rho^{-1}\psi^{\prime}(p)\big(\partial_{x} p(\rho,\mu)\big)^{2} dx = 0, \end{equation} providing the following constraint on the viscosity $\nu(\rho,\mu)$: \begin{equation} \label{dummy}\nu(\rho,\mu)=\rho\partial_{\rho}p\psi^{\prime}(p).\end{equation}
\end{l1vc}

\begin{proof} Take the continuity equation and the transport equation in $\mu$ and multiply through by derivatives of a function of the pressure $\psi(p)$ such that, $$\begin{aligned}&\partial_{\rho}\psi(p)\Big\{\partial_{t}\rho+\partial_{x}(\rho u)\Big\}=0,\\ & \ \partial_{\mu}\psi(p)\Big\{\partial_{t}\mu +u\partial_{x}\mu\Big\}=0,\end{aligned}$$ where adding the components together gives, $$\partial_{t}\psi(p)+u\partial_{x}\psi(p)+\rho\partial_{\rho}\psi (p)\partial_{x}u=0.$$  A derivation in $x$ provides that $$\partial_{t}\big(\partial_{x}\psi (p)\big)+\partial_{x}\big(u\partial_{x}\psi(p)\big)+\partial_{x}\big(\rho\partial_{\rho}\psi(p)\partial_{x}u\big)=0,$$ which we expand to $$\partial_{t}\big(\rho\rho^{-1}\partial_{x}\psi (p)\big)+\partial_{x}\big(\rho\rho^{-1}u\partial_{x}\psi (p)\big)+\partial_{x}\big(\rho\partial_{\rho}\psi(p)\partial_{x}u\big)=0,$$ such that adding it back to the momentum equation (\ref{sss}) and applying condition (\ref{dummy}) arrives with $$\partial_{t}\big(\rho\big\{u+\rho^{-1}\partial_{x}\psi (p)\big\}\big)+\partial_{x}\big(\rho u\big\{u+\rho^{-1}\partial_{x}\psi(p)\big\}\big)+\partial_{x}p(\rho,\mu)=0. $$  Multiplying this by $\big(u+\rho^{-1}\partial_{x}\psi(p)\big)$ then gives, $$\begin{aligned} \frac{1}{2}\partial_{t} \big\{\rho \big|u+\rho^{-1}\partial_{x}\psi (p)\big|^{2}\big\}+\frac{1}{2}\partial_{x}\big\{\rho u & \big|u+\rho^{-1}\partial_{x}\psi(p)\big|^{2}\big\} \\ & +\big\{u+\rho^{-1}\partial_{x}\psi (p)\big\}\partial_{x} p(\rho,\mu) = 0, \end{aligned}$$ which when integrated becomes $$\frac{d}{dt}\int_{\mathbb{R}}\Big\{\frac{\rho}{2}\big|u+\rho^{-1}\partial_{x}\psi(p)\big|^{2}+\mathscr{E}(\rho,\mu)\Big\}dx+\int_{\mathbb{R}}\rho^{-1}\psi^{\prime}(p)\big(\partial_{x}p(\rho,\mu)\big)^{2}dx=0,$$ completing the proof.\end{proof}

\section{\protect\centering $\S 4$ Establishing the Existence Theorem}
In this section our aim is to apply the inequalities in \textsection{3} predicated on the formulation in \textsection{2} to acquire the existence half of the theorem.  However, in order to do this we must first confirm that the energy inequalities satisfy the appropriate bounds.  Let us demonstrate this principle for both (\ref{energy}) and (\ref{lemma1}) in the form of the following lemma.  

\newtheorem*{l2vc}{Lemma 4.1}
\begin{l2vc}
For any solution $(\rho,u,\mu)$ of (\ref{rrr})-(\ref{ttt}) verifying, \begin{equation}\label{initcent}\int_{\mathbb{R}}\Big\{\rho_{0}\frac{u_{0}^{2}}{2}+\mathscr{E}(\rho_{0},\mu_{0})\Big\}dx<+\infty\end{equation} $$\mathit{and}$$ \begin{equation}\label{initclem}\int_{\mathbb{R}}\Big\{\frac{\rho_{0}}{2}\Big|u_{0}+\frac{\partial_{x}\psi(p_{0})}{\rho_{0}}\Big|^{2}+\mathscr{E}(\rho_{0},\mu_{0})\Big\}dx < +\infty,\end{equation} we have that \begin{equation}\label{12vc}\esssup_{[0,T]}\int_{\mathbb{R}}\Big\{\rho\frac{u^{2}}{2}+\mathscr{E}(\rho,\mu)\Big\}dx+\int_{0}^{T}\int_{\mathbb{R}}\nu(\rho,\mu)|\partial_{x}u|^{2}dxdt\leq C,\end{equation} $$\mathit{and}$$ \begin{equation}\begin{aligned}\label{12vc2}\esssup_{[0,T]}\int_{\mathbb{R}}\Big\{\frac{\rho}{2}\Big|u+\frac{\partial_{x}\psi(p)}{\rho}\Big|^{2}+\mathscr{E}(\rho,\mu)\Big\}dx+\int_{0}^{T}\int_{\mathbb{R}}\frac{\psi^{\prime}(p)}{\rho}|&\partial_{x}p|^{2}dxdt \leq C.\end{aligned}\end{equation} 
\end{l2vc}

\begin{proof} It suffices if every term on the left side of both inequality (\ref{12vc}) and (\ref{12vc2}) can be shown to be nonnegative.
    
First notice that we clearly have that $\rho u^{2}\geq 0$ for any barotropic fluid over $\mathbb{R}$, since $\rho$ is strictly nonnegative.  To check that $\mathscr{E}(\rho,\mu)\geq 0$ we simply refer to the definition given in (\ref{internalproper}).  Indeed $\left(\frac{p(s,\mu)-p(\tilde{\rho},\mu)}{s^{2}}\right)\geq 0$ when $\rho\geq \tilde{\rho}$ and $\left(\frac{p(s,\mu)-p(\tilde{\rho},\mu)}{s^{2}}\right)\leq 0$ for $\rho\leq \tilde{\rho}$, which implies $$\int_{\tilde{\rho}}^{\rho}\frac{p(s,\mu)-p(\tilde{\rho},\mu)}{s^{2}}ds\geq 0.$$  Together with (\ref{genenerg}) this gives that $\mathscr{E}(\rho,\mu)\geq 0$ . 

Next we check the viscosity coefficient $\nu(\rho,\mu)$.  Here the positivity follows from (\ref{visc}), where again the pressure is increasing in $\rho$ satisfying (\ref{rhoincrease}) and the density is positive definite away from the vacuum solution (which we show is forbidden due to proposition 4.1), so for a $\psi'(p)$ satisfying (\ref{rho2}) we see that $\psi'(p)\geq 0$.  Similarly, the last term on the right in (\ref{12vc2}) follows away from vacuum, where again we only rely upon the fact from \textsection{2} that $\psi'(p)\geq 0$. \end{proof}

These results provide the estimates that we use for the remainder of the paper.  That is, it is well-known (for example see Theorem 7.2 in \cite{PLL2} and the results in \cite{MV1}) that the existence of a global strong solution to the system (\ref{rrr})-(\ref{sss}) follows by regularity analysis in tandem with (\ref{energy}) and (\ref{lemma1}).  Below we present a similar approach for the case of a mixing multicomponent fluid (\ref{rrr})-(\ref{ttt}) using only what we have found above; namely, that (\ref{energy}) and (\ref{lemma1}) provide the following \emph{a priori} bounds: \begin{equation}\begin{aligned}\label{apri1}\|\sqrt{\nu(\rho,\mu)}&\partial_{x}u\|_{L^{2}(0,T;L^{2}(\mathbb{R}))}\leq C,\\\|\sqrt{\rho}u\|&_{L^{\infty}(0,T;L^{2}(\mathbb{R}))}\leq C,\\\|\mathscr{E}(\rho,\mu)&\|_{L^{\infty}(0,T;L^{1}(\mathbb{R}))}\leq C,\end{aligned}\end{equation} along with, \begin{equation}\begin{aligned}\label{apri2}\|(\partial_{x}\psi(p)/\sqrt{\rho})\|&_{L^{\infty}(0,T;L^{2}(\mathbb{R}))}\leq C,\\\|(\psi'(p)/\rho)^{1/2}\partial_{x}p(\rho,&\mu)\|_{L^{2}(0,T;L^{2}(\mathbb{R}))}\leq C.\end{aligned}\end{equation}  We will use these inequalitites extensively for the remainder of the paper.

As a remark, if we denote the \emph{internal energy density} $e$ as being characterized by the relations, \begin{equation}\begin{aligned}\label{energydensity}& \rho e =\rho\int_{\tilde{\rho}}^{\rho}\frac{\partial e(s,\mu)}{\partial\rho} ds \quad\mathrm{with}\quad e(\tilde{\rho},\mu)=0,\\ &\mathrm{and}\quad \partial_{\rho}e(\rho,\mu)=\rho^{-2}p(\rho,\mu)-\rho^{-2} p(\tilde{\rho},\mu),\end{aligned}\end{equation} then $e$ is closely related to the \emph{specific internal energy} $e_{s}$, defined by $$e_{s}(\rho)\equiv\int_{1}^{\rho}\frac{p(s)}{s^{2}}ds,$$ which is provided for the single barotropic compressible fluid case in \cite{NS} and \cite{F1}; but in the multifluid context, since the internal energy $\mathscr{E}$ is tempered up to some constant reference density $\tilde{\rho}$, the usual form of the specific internal energy inherits a tempering in $\tilde{\rho}$ as well, which is what is provided here by the function $e$.  We also note that the tempered internal energy $\mathscr{E}$ now satisfies the following conservation form as mentioned in \textsection{3}, \begin{equation}\label{where}\partial_{t}\mathscr{E}(\rho,\mu)+\partial_{x}(\mathscr{E}(\rho,\mu) u) + \Big\{\rho^{2}\partial_{\rho}e(\rho,\mu)+p(\tilde{\rho},\mu)-p(\tilde{\rho},\tilde{\mu})\Big\}\partial_{x}u = 0,\end{equation} where it is easy to confirm that upon integration this recovers (\ref{temperpres}).   

\subsection{4.1 Bounds on the Density}

For the existence theorem we need to establish a bound for the density in the space $L^{\infty}(0,T;\dot{H}^{1}(\mathbb{R}))$.  To achieve this we first establish uniform bounds on the density.

\newtheorem*{rhoprop}{Proposition 4.1}
\begin{rhoprop}  For every $T>0$ there exist two distinct positive constants $\underline{\varrho}$ and $\overline{\varrho}$ such that \begin{equation}\label{lowup}\underline{\varrho}\leq \rho(t,x)\leq\overline{\varrho}\quad\quad\forall (t,x)\in [0,T]\times\mathbb{R}.\end{equation}
\end{rhoprop}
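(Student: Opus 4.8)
The plan is to combine the two \emph{a priori} bounds furnished by Lemma 4.1 — the classical control $\|\mathscr{E}(\rho,\mu)\|_{L^\infty(0,T;L^1(\mathbb{R}))}\le C$ from (\ref{apri1}) and the Bresch--Desjardins control $\|\partial_x\psi(p)/\sqrt{\rho}\|_{L^\infty(0,T;L^2(\mathbb{R}))}\le C$ from (\ref{apri2}) — into a single monotone \emph{barrier function} of the density whose boundedness is equivalent to (\ref{lowup}). I would work throughout with the smooth short-time solution supplied by Proposition 2.1, derive bounds depending only on $T$ and the data, and then pass to all of $[0,T]$ by the standard continuation argument, since the resulting bounds are independent of the short-time lower bound $\underline{\varrho}_r$ used to construct that solution. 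Two preliminary facts about $\mu$ are needed first: since $\mu$ obeys the pure transport equation $\partial_t\mu+u\partial_x\mu=0$, it stays between the extremes of $\mu_0$, giving $\check{\mu}\le\mu\le\hat{\mu}$; and a short computation using (\ref{rrr}) and (\ref{ttt}) shows that $w:=\partial_x\mu/\rho$ is itself transported, $\partial_t w+u\partial_x w=0$, so that the hypothesis $|\partial_x\mu_0|\le C\rho_0$ propagates to $|\partial_x\mu|\le C\rho$ for all time. These are exactly what is needed to tame the $\mu$-contributions to $\partial_x\psi(p)=\psi'(p)\bigl(\partial_\rho p\,\partial_x\rho+\partial_\mu p\,\partial_x\mu\bigr)$ via the pressure bounds (\ref{rhomub}).

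The heart of the argument is the function $H(\rho,\mu)$ defined by $\partial_\rho H=\psi'(p)\,\partial_\rho p/\sqrt{\rho}$ with $H(\tilde{\rho},\cdot)=0$. On one side, writing $\partial_x H=\partial_x\psi(p)/\sqrt{\rho}$ up to the $\mu$-terms above, the bound (\ref{apri2}) together with $|\partial_x\mu|\le C\rho$ and (\ref{rhomub}) yields $\|\partial_x H\|_{L^\infty(0,T;L^2(\mathbb{R}))}\le C$. On the other side, inserting the growth bounds (\ref{rho2}) and (\ref{rhopb}) into the definition and integrating, the exponent conditions (\ref{gamma3})--(\ref{gamma4}) are \emph{precisely} what force $H(\rho,\mu)\to+\infty$ as $\rho\to\infty$ and $H(\rho,\mu)\to-\infty$ as $\rho\to0$: the inequality $\overline{\alpha}<(\check{\gamma}-1/2)/\hat{\gamma}$ makes the defining integrand nonintegrable at $+\infty$ (here, for large $p$, the sup in (\ref{rho2}) is governed by $\overline{\alpha}$), while $\underline{\alpha}>(\hat{\gamma}-1/2)/\check{\gamma}$ makes it nonintegrable at $0$ (there the sup is governed by $\underline{\alpha}$). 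This is where (\ref{gamma1})--(\ref{gamma2}), which guarantee that admissible $\underline{\alpha},\overline{\alpha}$ exist at all, are used. It therefore suffices to bound $H$ in $L^\infty$.

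To bound $H$ I would exploit that the extreme-density sets have finite measure. From (\ref{apri1}) and the growth $\mathscr{E}\gtrsim\rho^{\check{\gamma}}$ for large $\rho$ (a consequence of (\ref{internalproper}) and (\ref{rho2})), the superlevel set obeys $|\{\rho\ge\rho^{\ast}\}|\le C(\rho^{\ast})^{-\check{\gamma}}$; and since by (\ref{genenerg}) the energy $\mathscr{E}(\rho,\mu)$ tends to the positive constant $p(\tilde{\rho},\tilde{\mu})$ as $\rho\to0$, it is bounded below by some $c_0>0$ on $\{\rho\le\rho_{\ast}\}$, whence $|\{\rho\le\rho_{\ast}\}|\le C/c_0$. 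Now if $\rho(t,x_0)=M$ is large, let $x_1$ be the first point past $x_0$ with $\rho(t,x_1)=\rho^{\ast}$ — which exists because finite energy and the far-field data force $\rho\to\tilde{\rho}$ at spatial infinity — so that $[x_0,x_1]\subset\{\rho\ge\rho^{\ast}\}$ and Cauchy--Schwarz gives $|H(M)-H(\rho^{\ast})|\le\|\partial_x H\|_{L^2}\,|x_1-x_0|^{1/2}\le C(\rho^{\ast})^{-\check{\gamma}/2}$, a constant independent of $M$; since $H(M)\to+\infty$, this forces $M\le\overline{\varrho}$. The lower bound is identical, using $\rho_{\ast}$, the sublevel set, and the divergence $H\to-\infty$, to produce $\rho\ge\underline{\varrho}$.

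I expect the lower bound — the exclusion of vacuum — to be the genuine obstacle. In contrast with the upper bound, the energy is \emph{useless} near vacuum, since $\mathscr{E}$ remains bounded as $\rho\to0$ (it only pins down the \emph{measure} of the low-density region); the sole mechanism available is the singular weight $\rho^{-1}$ inside the Bresch--Desjardins estimate (\ref{apri2}), and converting it into a pointwise lower bound requires both the construction of a barrier $H$ that actually diverges at $\rho=0$ and the sharp arithmetic of (\ref{gamma1})--(\ref{gamma4}) tying $\underline{\alpha},\overline{\alpha}$ to $\check{\gamma},\hat{\gamma}$. The secondary difficulty, technical rather than conceptual, is the uniform control of the $\mu$-cross-terms in $\partial_x H$, for which the propagated estimate $|\partial_x\mu|\le C\rho$ and the bounds (\ref{rhomub}) on $\partial_\mu p$ are indispensable.
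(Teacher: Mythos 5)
Your strategy is essentially the paper's: build a function of the density that diverges at $0$ and at $+\infty$, control its spatial derivative through the Bresch--Desjardins bound (\ref{apri2}) together with the transported estimate $|\partial_x\mu|\le C\rho$ and the structure conditions on $p$, locate everywhere nearby a point of moderate density using the $L^1$ control of $\mathscr{E}$ from (\ref{apri1}), and conclude with the fundamental theorem of calculus. The paper's Lemma 4.2 plays the role of your Chebyshev argument (it produces, in every window of a fixed length $K$ determined by the energy bound, a point with $\rho>\epsilon$, resp.\ $\rho<\epsilon^{-1}$), and its Lemma 4.4 bounds $\partial_x\rho^{-\eta}$ and $\partial_x\rho^{\sigma}$ in $L^1$ of such windows; these devices are interchangeable with your finite-measure-of-level-sets plus Cauchy--Schwarz step. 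Your preliminary observations (transport of $\partial_x\mu/\rho$, $\mathscr{E}\to p(\tilde\rho,\tilde\mu)>0$ as $\rho\to0$, $\mathscr{E}\gtrsim\rho^{\check\gamma}$ for large $\rho$) all appear in the paper in the same roles.

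Two points in your construction, however, do not close as written, and the paper's specific choices are engineered to avoid exactly these. First, your barrier $H(\rho,\mu)$ depends on $\mu$, so $\partial_x H=\partial_\rho H\,\partial_x\rho+\partial_\mu H\,\partial_x\mu$, and $\partial_\mu H$ involves $\psi''$ and $\partial_{\rho\mu}p$ integrated in $\rho$ --- quantities assumed locally bounded only for the uniqueness half of the theorem, not for existence. The paper's barriers are the pure powers $\rho^{-\eta}$ and $\rho^{\sigma}$, with $\eta=\underline{\alpha}\check{\gamma}-\hat{\gamma}+1/2$ and $\sigma=\check{\gamma}-\overline{\alpha}\hat{\gamma}-1/2$ positive precisely by (\ref{gamma1})--(\ref{gamma2}); these carry no $\mu$-dependence, and $\mu$ enters only through the first-order identity $\partial_\rho p\,\partial_x\rho=\partial_x p-\partial_\mu p\,\partial_x\mu$, which needs nothing beyond (\ref{rhomub}). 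Second, your weight $\psi'(p)$ in $\partial_\rho H$ admits no upper bound --- (\ref{rho2}) is a one-sided inequality --- so the cross term $\psi'(p)\partial_\mu p\,\partial_x\mu/\sqrt{\rho}$ cannot be estimated from above. The paper instead divides by the explicit weights $p^{-\underline{\alpha}}$, $p^{-\overline{\alpha}}$: the lower bound $\psi'(p)\ge C\sup(p^{-\underline{\alpha}},p^{-\overline{\alpha}})$ then lets the $\partial_x p$ term be dominated by $\partial_x\psi(p)/\sqrt{\rho}$, which is exactly what (\ref{apri2}) controls, while the cross terms $p^{-\underline{\alpha}}\partial_\mu p\sqrt{\rho}$ and $p^{-\overline{\alpha}}\partial_\mu p\sqrt{\rho}$ have explicit power-law bounds ($L^\infty$ on $\{\rho\le1\}$, and $C+C\mathscr{E}\in L^1$ on $\{\rho\ge1\}$ via Lemma 4.3). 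With these two substitutions your argument coincides with the paper's proof.
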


Showing this proposition requires the following three lemmas which provide the groundwork for its subsequent proof.

\newtheorem*{rho1g}{Lemma 4.2}
\begin{rho1g} Let $\mathscr{F}\geq 0$ be a function defined on $[0,+\infty)\times\mathbb{R}$ where $\mathscr{F}(\cdot,x)$ is uniformly continuous with respect to $x\in\mathbb{R}$ and where there exists a $\delta>0$ with $\mathscr{F}(0,x)>\delta$ for any $x$.  Then there exists an $\epsilon >0$ such that for any constant $\bar{C} >0$ there exists a constant $K >0$ so that for any nonnegative function $f$ verifying $\int_{\mathbb{R}}\mathscr{F}(f(x),x) dx \leq \bar{C}$, and for any $x_{0}\in\mathbb{R}$, there exists a point $x_{1}\in I=[x_{0}-K,x_{0}+K]$ such that $f(x_{1})>\epsilon$.     
\end{rho1g}

\begin{proof}   For any fixed $\mathscr{F}$ there exists a $\tilde{C}$ such that for all $y \leq \epsilon$ we have, $$\mathscr{F}(y,x) \geq \frac{1}{2\tilde{C}}$$ since $\mathscr{F}(0,x)>\delta$ for any $x$ and $\mathscr{F}$ is uniformly continuous in $x$.  Let us fix $\bar{C}>0$ and define \begin{equation}\label{K}K=2\bar{C}\tilde{C}.\end{equation}  We show that this $K$ verifies the desired properties.   Here we utilize a proof by contradiction in the spirit of \cite{MV3}.  Assume that we can find a nonnegative function $f$ verifying $\int_{\mathbb{R}}\mathscr{F}(f(x),x)dx\leq\bar{C}$ and an $x_{0}\in\mathbb{R}$ with $$\esssup_{x\in I} f \leq \epsilon,$$ where $I=[x_{0}-K,x_{0}+K]$.  Since $\mathscr{F}\geq0$ this implies  $$\bar{C}\geq\int_{\mathbb{R}}\mathscr{F}(f(x),x)dx \geq \int_{I}\mathscr{F}(f(x),x)dx\geq\int_{I}\frac{1}{2\tilde{C}}dx,$$ which yields $\bar{C}\geq K/\tilde{C}$ in contradiction to (\ref{K}).\end{proof}    

Additionally we require the following technical lemma.

\newtheorem*{tech}{Lemma 4.3}
\begin{tech} Providing (\ref{rho2}) then (\ref{genenerg}) yields, $$\begin{aligned}&\rho^{\hat{\gamma}}+ C\rho\leq C+1 \quad\mathit{for}\quad\rho\leq 1,\\ \rho^{\check{\gamma}}+&\frac{\rho}{C}\leq C+ C\mathscr{E}(\rho,\mu)\quad\mathit{for}\quad\rho\geq 1.\end{aligned}$$     
\end{tech}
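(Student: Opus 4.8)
The plan is to treat the two inequalities separately, since the case $\rho\le 1$ is essentially free while the case $\rho\ge 1$ carries all the content. For $\rho\le 1$ one simply observes that $\hat{\gamma}>1>0$ forces $\rho^{\hat{\gamma}}\le 1$ and that $C\rho\le C$, so that $\rho^{\hat{\gamma}}+C\rho\le C+1$ holds for any admissible constant with no appeal to the energy at all; this disposes of the first line.

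For the regime $\rho\ge 1$ I would extract a lower bound on $\mathscr{E}(\rho,\mu)$ of order $\rho^{\check{\gamma}}$, uniform in $\mu$. I would first isolate the boundary terms in the definition (\ref{genenerg}): the envelope (\ref{rho2}) bounds $p(\tilde{\rho},\cdot)$ above and below by positive constants depending only on $\tilde{\rho}$, so $p(\tilde{\rho},\tilde{\mu})-p(\tilde{\rho},\mu)$ contributes at most a $\mu$-independent additive constant. For the integral term, since $p$ is nondecreasing in $\rho$ by (\ref{rhoincrease}) and $\rho>\tilde{\rho}$ for large $\rho$, the integrand $\big(p(s,\mu)-p(\tilde{\rho},\mu)\big)/s^{2}$ is nonnegative on $[\tilde{\rho},\rho]$, so I may freely discard an initial sub-interval. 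Invoking the lower bound $p(s,\mu)\ge s^{\check{\gamma}}/C$ from (\ref{rho2}) together with the boundedness of $p(\tilde{\rho},\mu)$, I would fix $s_{0}$ so large that $p(s,\mu)-p(\tilde{\rho},\mu)\ge s^{\check{\gamma}}/(2C)$ for every $s\ge s_{0}$ and every $\mu$, and restrict the integration to $[s_{0},\rho]$.

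The remaining computation is elementary: because $\check{\gamma}>1$ one has $\int_{s_{0}}^{\rho}s^{\check{\gamma}-2}\,ds=(\rho^{\check{\gamma}-1}-s_{0}^{\check{\gamma}-1})/(\check{\gamma}-1)$, and after the prefactor $\rho$ this yields $\mathscr{E}(\rho,\mu)\ge c\,\rho^{\check{\gamma}}-c'\rho-C''$ for positive constants independent of $\mu$. Since $\check{\gamma}>1$, both the linear term $c'\rho$ and the additive constant are dominated by $\tfrac{1}{2}c\,\rho^{\check{\gamma}}$ once $\rho\ge M$ for a suitable threshold $M$, giving $\mathscr{E}(\rho,\mu)\ge \tfrac{1}{2}c\,\rho^{\check{\gamma}}$ there. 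On the complementary bounded range $1\le\rho\le M$ the left-hand side $\rho^{\check{\gamma}}+\rho/C$ is bounded while $\mathscr{E}\ge 0$ by Lemma~4.1, so the inequality holds after enlarging the standalone constant. Combining the two ranges and absorbing $\rho/C\le\rho^{\check{\gamma}}$ (valid for $\rho\ge 1$) into the leading term, a single sufficiently large choice of $C$ yields $\rho^{\check{\gamma}}+\rho/C\le C+C\mathscr{E}(\rho,\mu)$. The main obstacle is not any individual estimate but the bookkeeping that keeps every constant uniform in $\mu\in\mathbb{R}$; this is exactly what the $\mu$-uniform two-sided envelope for $p$ in (\ref{rho2}) is designed to guarantee, so the uniformity propagates cleanly through the integral and the threshold choice.
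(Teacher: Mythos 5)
Your proposal is correct and follows the same route as the paper's (very terse) proof: the $\rho\le 1$ case is trivial, and for $\rho\ge 1$ one expands $\mathscr{E}$ via (\ref{genenerg}), uses the $\mu$-uniform lower envelope $p(s,\mu)\ge s^{\check{\gamma}}/C$ from (\ref{rho2}) to show the $\rho^{\check{\gamma}}$ term dominates as $\rho\to\infty$, and rescales the constants. Your write-up merely supplies the bookkeeping (splitting off $[\,\tilde{\rho},s_{0}]$, absorbing the lower-order terms, handling the compact range $1\le\rho\le M$) that the paper leaves implicit.
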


\begin{proof} Trivially, when $\rho\leq 1$ we have that $\rho^{\hat{\gamma}}+ C\rho\leq C+1$.  When $\rho\geq 1$ we use (\ref{rho2}) to expand $\mathscr{E}(\rho,\mu)$ where (\ref{genenerg}) gives that as $\rho\rightarrow\infty$ the $\rho^{\check{\gamma}}$ dominates such that scaling the constant correctly provides the result.\end{proof}

Now we are able to find uniform positive bounds $\underline{\varrho}$ and $\overline{\varrho}$ on the density which inherently preclude the vacuum and concentration states.

\newtheorem*{rho2}{Lemma 4.4}
\begin{rho2} Assume that (\ref{gamma1}), (\ref{gamma2}) and (\ref{rho2})-(\ref{rhomub}) are satisfied and let \begin{equation}\label{xidef}\partial_{x}\xi(\rho)=\mathbbm{1}_{\{\rho\leq1\}}\partial_{x}\rho^{-\eta}+\mathbbm{1}_{\{\rho\geq 1\}}\partial_{x}\rho^{\sigma}.\end{equation}  Then there exists an $\eta>0$ and $\sigma>0$ such that for any $K>0$ there exists a $C_{K}$ with \begin{equation}\label{44lem}\|\partial_{x}\xi(\rho)\|_{L^{\infty}(0,T;L^{1}(I)}\leq C_{K}\end{equation} for every  $x_{0}\in\mathbb{R}$ and $I=[x_{0}-K,x_{0}+K]$.\end{rho2}

\begin{proof} First recall that the pressure satisfies $$\partial_{x}p(\rho,\mu)= \partial_{\rho}p(\rho,\mu)\partial_{x}\rho+\partial_{\mu}p(\rho,\mu)\partial_{x}\mu.$$  Here we are concerned with two cases, namely when $\rho\leq 1$ and when $\rho\geq 1$.  For the case when $\rho\leq 1$ we multiply through by $\rho^{-1/2}p(\rho,\mu)^{-\underline{\alpha}}$ where $\underline{\alpha}$ is given by (\ref{gamma3}), which yields \begin{equation}\label{xi1} \frac{\partial_{\rho}p(\rho,\mu)\partial_{x}\rho}{\sqrt{\rho}p(\rho,\mu)^{\underline{\alpha}}}=\frac{\partial_{x}p(\rho,\mu)}{\sqrt{\rho}p(\rho,\mu)^{\underline{\alpha}}}-\frac{\partial_{\mu}p(\rho,\mu)\partial_{x}\mu}{\sqrt{\rho}p(\rho,\mu)^{\underline{\alpha}}}.\end{equation}  Likewise for $\rho\geq 1$ we multiply through by $p(\rho,\mu)^{-\overline{\alpha}}\rho^{-1/2}$ given $\overline{\alpha}$ from (\ref{gamma4}) such that \begin{equation}\label{xi1b}\frac{\partial_{\rho}p(\rho,\mu)\partial_{x}\rho}{\sqrt{\rho}p(\rho,\mu)^{\overline{\alpha}}}=\frac{\partial_{x}p(\rho,\mu)}{\sqrt{\rho}p(\rho,\mu)^{\overline{\alpha}}}-\frac{\partial_{\mu}p(\rho,\mu)\partial_{x}\mu}{\sqrt{\rho}p(\rho,\mu)^{\overline{\alpha}}}.\end{equation}  In order to demonstrate the lemma we will control the right hand side of both (\ref{xi1}) and (\ref{xi1b}) such that each is bounded in $L^{\infty}(0,T;L^{1}_{loc}(\mathbb{R}))$ for any point in $\mathbb{R}$ up to some fixed subinterval $I$.

Towards this, we first show that $\rho^{-1}\partial_{x}\mu$ is bounded in $L^{\infty}(0,T;L^{\infty}(\mathbb{R}))$.  That is, take a derivation in $x$ of (\ref{ttt}) in order to write \begin{equation}\label{varphi1}\partial_{t}(\rho\rho^{-1}\partial_{x}\mu)+\partial_{x}(\rho u\rho^{-1}\partial_{x}\mu)=0,\end{equation} such that multiplying through by a function $\vartheta'(\rho^{-1}\partial_{x}\mu)=\vartheta'$ achieves $$\partial_{t}(\rho\vartheta(\rho^{-1}\partial_{x}\mu))+\partial_{x}(\rho u \vartheta(\rho^{-1}\partial_{x}\mu))=0.$$  Now choose $\vartheta'$ such that for every test function $\vartheta(\rho^{-1}\partial_{x}\mu)\in \mathcal{D}(\mathbb{R})$ with compact support, the function $\vartheta$ vanishes almost everywhere over the finite interval $\mathcal{I}=[-M,M]$, with $M$ a constant.  Upon integration this implies $$\int_{0}^{T}\frac{d}{dt}\int_{\mathbb{R}}\rho\vartheta(\rho^{-1}\partial_{x}\mu)dxdt = 0,$$ such that for an appropriate choice of initial condition, where $\rho_{0}^{-1}\partial_{x}\mu_{0}\in \mathcal{I}$, we find $$\esssup_{[0,T]}\int_{\mathbb{R}}\rho\vartheta(\rho^{-1}\partial_{x}\mu)dx = 0.$$  This implies that $\rho\vartheta(\rho^{-1}\partial_{x}\mu)=0$ almost everywhere for all $(t,x)\in (0,T)\times\mathbb{R}$, and so we can conclude that the argument of $\vartheta$ takes values over the interval $\mathcal{I}$, or more clearly that for $\rho$ a.e. $|\rho^{-1}\partial_{x}\mu|\leq M$.  This is then enough to educe the norm: $$\|\rho^{-1}\partial_{x}\mu\|_{L^{\infty}(0,T;L^{\infty}(\mathbb{R}))}\leq M.$$ 
  
However, this is not yet enough to control the last term on the right for the two cases.  In (\ref{xi1}) applying (\ref{rho2}) and (\ref{rhomub}) further provides $$\frac{\sqrt{\rho}\partial_{\mu}p(\rho,\mu)}{p(\rho,\mu)^{\underline{\alpha}}}\leq C_{0}\rho^{\check{\gamma}- \underline{\alpha}\hat{\gamma}+\frac{1}{2}}\quad\quad\mathrm{for}\quad\quad\rho\leq 1$$ for a positive constant $C_{0}$.  Using (\ref{gamma3}) from above we have that $\check{\gamma}\geq\underline{\alpha}\hat{\gamma}-1/2$, and so the positivity of the exponent gives $$C_{0}\rho^{\check{\gamma}-\underline{\alpha}\hat{\gamma}+\frac{1}{2}}\leq C\quad\quad\mathrm{for}\quad\quad\rho\leq 1,$$ which leads to, \begin{equation}\bigg\|\mathbbm{1}_{\{\rho\leq 1\}}\frac{\sqrt{\rho}\partial_{\mu}p(\rho,\mu)}{p(\rho,\mu)^{\underline{\alpha}}}\bigg\|_{L^{\infty}(0,T;L^{\infty}(\mathbb{R}))}\leq C.\end{equation}

Similarly for (\ref{xi1b}) we apply (\ref{rho2}) and (\ref{rhomub}) to see that, $$\frac{\sqrt{\rho}\partial_{\mu}p(\rho,\mu)}{p(\rho,\mu)^{\overline{\alpha}}}\leq C_{0}\rho^{\hat{\gamma}-\overline{\alpha}\check{\gamma}+\frac{1}{2}}\quad\quad\mathrm{for}\quad\quad \rho\geq 1.$$  Notice that since (\ref{gamma4}) provides $\check{\gamma}\geq\hat{\gamma}-\overline{\alpha}\check{\gamma}+1/2$, then applying lemma 4.3 implies $$\mathbbm{1}_{\{\rho\geq 1\}}\left(\frac{\sqrt{\rho}\partial_{\mu}p(\rho,\mu)}{p(\rho,\mu)^{\overline{\alpha}}}\right)\leq C+C\mathscr{E}(\rho,\mu).$$  Integrating over $I$ gives $$\int_{I} \bigg|\mathbbm{1}_{\{\rho\geq 1\}} \frac{\sqrt{\rho}\partial_{\mu}p(\rho,\mu)}{p(\rho,\mu)^{\overline{\alpha}}}\bigg|dx\leq 2KC+C\int_{\mathbb{R}}\mathscr{E}(\rho,\mu)dx,$$ such that applying (\ref{apri1}) establishes $$\bigg\|\mathbbm{1}_{\{\rho\geq 1\}}\frac{\sqrt{\rho}\partial_{\mu}p(\rho,\mu)}{p(\rho,\mu)^{\overline{\alpha}}}\bigg\|_{L^{\infty}(0,T;L_{loc}^{1}(\mathbb{R}))}\leq C_{K},$$ for $C_{K}$ a constant depending only on $K$.

Now consider the $\partial_{x}p$ term in (\ref{xi1}) where here again we treat the two cases $\rho\leq 1$ and $\rho\geq 1$ separately.  For the case $\rho\leq 1$ notice that we have by the bound on $\psi'(p)$ in (\ref{rho2}) that $$ \bigg|\mathbbm{1}_{\{\rho\leq 1\}}\frac{\partial_{x}p(\rho,\mu)}{\sqrt{\rho}p(\rho,\mu)^{\underline{\alpha}}}\bigg|=C|\mathbbm{1}_{\{\rho\leq 1\}}\rho^{-1/2}\partial_{x}p(\rho,\mu)^{1-\underline{\alpha}}|\leq C|\mathbbm{1}_{\{\rho\leq 1\}}\rho^{-1/2}\partial_{x}\psi(p)|.$$  Upon integration (\ref{apri2}) gives $$\begin{aligned}\int_{\mathbb{R}}|\mathbbm{1}_{\{\rho\leq 1\}}\rho^{-1/2}\partial_{x}p(\rho,\mu)^{1-\underline{\alpha}}|^{2}dx &\leq C \int_{\mathbb{R}}|\mathbbm{1}_{\{\rho\leq 1\}}\rho^{-1/2}\partial_{x}\psi(p)|^{2}dx \\ &\leq C,\end{aligned}$$ and so we obtain $$\|\mathbbm{1}_{\{\rho\leq 1\}}\rho^{-1/2}p(\rho,\mu)^{-\underline{\alpha}}\partial_{x}p(\rho,\mu)\|_{L^{\infty}(0,T;L^{2}(\mathbb{R}))}\leq C.$$ 
 
Similarly for $\rho\geq 1$ we apply (\ref{rho2}), giving $$\bigg|\mathbbm{1}_{\{\rho\geq 1\}}\frac{\partial_{x}p(\rho,\mu)}{\sqrt{\rho}p(\rho,\mu)^{\overline{\alpha}}}\bigg|=C|\mathbbm{1}_{\{\rho\geq 1\}}\rho^{-1/2}\partial_{x}p(\rho,\mu)^{1-\overline{\alpha}}|\leq C|\mathbbm{1}_{\{\rho\geq 1\}}\rho^{-1/2}\partial_{x}\psi(p)|,$$ such that integrating and utilizing (\ref{apri2}) yields $$\begin{aligned}\int_{\mathbb{R}}\bigg|\mathbbm{1}_{\{\rho\geq 1\}}\frac{\partial_{x}p(\rho,\mu)}{\sqrt{\rho}p(\rho,\mu)^{\overline{\alpha}}}\bigg|dx &\leq C\int_{\mathbb{R}}|\mathbbm{1}_{\{\rho\geq 1\}}\rho^{-1/2}\partial_{x}\psi(p)|dx\\ &\leq C,\end{aligned}$$ and so $$\|\mathbbm{1}_{\{\rho\geq 1\}}\rho^{-1/2}p(\rho,\mu)^{-\overline{\alpha}}\partial_{x}p(\rho,\mu)\|_{L^{\infty}(0,T;L^{1}(\mathbb{R}))}\leq C.$$ 

Combining these results we have thus acquired the important bound on the left sides of (\ref{xi1}) and (\ref{xi1b}): \begin{equation}\begin{aligned}\label{rhopbound}\|\mathbbm{1}_{\{\rho\leq 1\}}&\rho^{-1/2}p(\rho,\mu)^{-\underline{\alpha}}\partial_{\rho}p(\rho,\mu)\partial_{x}\rho\|_{L^{\infty}(0,T;L^{2}_{loc}(\mathbb{R}))} \\ & + \|\mathbbm{1}_{\{\rho\geq 1\}}\rho^{-1/2}p(\rho,\mu)^{-\overline{\alpha}}\partial_{\rho}p(\rho,\mu)\partial_{x}\rho\|_{L^{\infty}(0,T;L^{1}(I))}\leq C_{K}.\end{aligned}\end{equation}

It remains to show that for all $\rho$ we have bounds on some power of the spatial derivative $\rho_{x}$.  First notice that when $\rho\leq 1$ applying (\ref{rho2}) and (\ref{rhopb}) to the left of (\ref{xi1}) provides $$\bigg|\mathbbm{1}_{\{\rho\leq 1\}}\frac{\partial_{\rho}p(\rho,\mu)\partial_{x}\rho}{\sqrt{\rho}p(\rho,\mu)^{\underline{\alpha}}}\bigg|\geq C|\mathbbm{1}_{\{\rho\leq 1\}}\rho^{\hat{\gamma}-\underline{\alpha}\check{\gamma}-3/2}\partial_{x}\rho|= C|\mathbbm{1}_{\{\rho\leq 1\}}\rho^{\hat{\gamma}-\underline{\alpha}\check{\gamma}-3/2}||\partial_{x}\rho|,$$ such that upon squaring and integrating we find $$\begin{aligned}C\int_{\mathbb{R}}|\mathbbm{1}_{\{\rho\leq 1\}}\rho^{\hat{\gamma}-\underline{\alpha}\check{\gamma}-3/2}\partial_{x}\rho|^{2}dx & \leq\int_{\mathbb{R}}\bigg|\mathbbm{1}_{\{\rho\leq 1\}}\frac{\partial_{\rho}p(\rho,\mu)\partial_{x}\rho}{\sqrt{\rho}p(\rho,\mu)^{\underline{\alpha}}}\bigg|^{2}dx\\ &\leq C.\end{aligned}$$  This provides what we desire by way of the following equality: \begin{equation}\begin{aligned}\label{xi2}\|\mathbbm{1}_{\{\rho\leq 1\}}\rho^{\hat{\gamma}-\underline{\alpha}\check{\gamma} -3/2} \partial_{x}\rho\|_{L^{\infty}(0,T;L_{loc}^{2}(\mathbb{R}))}& =C\|\mathbbm{1}_{\{\rho\leq 1\}}\partial_{x}\rho^{\hat{\gamma}-\underline{\alpha}\check{\gamma}-1/2}\|_{L_{loc}^{\infty}(0,T;L^{2}(\mathbb{R}))}\\ &\leq C.\end{aligned}\end{equation}  Thus when applying the condition from (\ref{gamma1}) it follows that $\eta$ satisfies \begin{equation}\label{eta}\eta=\underline{\alpha}\check{\gamma} -\hat{\gamma} + \frac{1}{2}.\end{equation} 

Likewise when $\rho\geq 1$ applying (\ref{rho2}) and (\ref{rhopb}) provides $$\bigg|\mathbbm{1}_{\{\rho\geq 1\}}\frac{\partial_{\rho}p(\rho,\mu)\partial_{x}\rho}{\sqrt{\rho}p(\rho,\mu)^{\overline{\alpha}}}\bigg|\geq C|\mathbbm{1}_{\{\rho\geq 1\}}\rho^{\check{\gamma}-\overline{\alpha}\hat{\gamma}-3/2}\partial_{x}\rho|= C|\mathbbm{1}_{\{\rho\geq 1\}}\rho^{\check{\gamma}-\overline{\alpha}\hat{\gamma}-3/2}||\partial_{x}\rho|,$$ such that integrating over $I$ gives by (\ref{rhopbound}) that $$\begin{aligned}C\int_{I}|\mathbbm{1}_{\{\rho\geq 1\}}\rho^{\check{\gamma}-\overline{\alpha}\hat{\gamma}-3/2}\partial_{x}\rho|dx & \leq  \int_{I} \bigg|\mathbbm{1}_{\{\rho\geq 1\}}\frac{\partial_{\rho}p(\rho,\mu)\partial_{x}\rho}{\sqrt{\rho}p(\rho,\mu)^{\overline{\alpha}}}\bigg|dx \\ &\leq C_{K}.\end{aligned}$$  Here this yields \begin{equation}\label{upper2}\|\mathbbm{1}_{\{\rho\geq 1\}}\rho^{\check{\gamma}-\overline{\alpha}\hat{\gamma}-\frac{3}{2}}\partial_{x}\rho\|_{L^{\infty}(0,T;L^{1}(I))}=C\|\mathbbm{1}_{\{\rho\geq 1\}}\partial_{x}\rho^{\check{\gamma}-\overline{\alpha}\hat{\gamma}-\frac{1}{2}}\|_{L^{\infty}(0,T;L^{1}(I))}\leq C_{K}.\end{equation}  Thus using the condition from (\ref{gamma2}) establishes \begin{equation}\label{sigma}\sigma=\check{\gamma}-\overline{\alpha}\hat{\gamma}-\frac{1}{2}.\end{equation} 

In order to complete the proof all that remains is to add (\ref{xi2}) and (\ref{upper2}) together and apply Minkowski's inequality, which gives $$\|\partial_{x}\xi(\rho)\|_{L^{\infty}(0,T;L^{1}(I))}\leq C_{K}.$$ \end{proof}

We are now able to show Proposition 4.1 by applying the preceding results.

\begin{proof}[Proof of Proposition 4.1] For $t$ fixed set $$\mathscr{F}(y,x)=\begin{cases}\mathscr{E}(y,\mu(t,x))& \quad\mathrm{for} \quad y \leq 1 \\ \mathscr{E}(1,\mu(t,x))&\quad \mathrm{for}\quad y\geq 1\end{cases}$$ such that $y=\rho$.  Next (\ref{genenerg}) together with (\ref{rho2}) shows that $\mathscr{F}(y,x)$ is continuous in $\rho$ uniformly with respect to $x$, and (\ref{apri1}) assures that $$\int_{\mathbb{R}}\mathscr{F}(\rho(t,x),x)dx\leq C.$$  Then the hypothesis of lemma 4.2 is satisfied as long as there exists a $\delta>0$ such that $\mathscr{F}(0,x)>\delta$.  But for $\rho\leq 1$ we can apply (\ref{rho2}) to the form of the internal energy (\ref{genenerg}) to see that as $\rho\rightarrow 0$ we have $\mathscr{E}(\rho,\mu)\geq C$.  Likewise when $\rho=1$ we see that $\mathscr{E}(1,\mu)\geq C_{1}$ for $C_{1}$ a constant.  So we have for a positive $\delta < \inf \{C,C_{1}\}$ that the hypothesis of lemma 4.2 is satisfied.  Then for any $x\in\mathbb{R}$ with $x_{0}=x$ from lemma 4.2 there exists an $x_{1}\in I=[x-K,x+K]$ such that $\rho(t,x_{1})>\epsilon$.  Note that $K$ does not depend on $t$ since $\int_{\mathbb{R}}\mathscr{F}(\rho(t,x),x)dx$ does not depend on time thanks to (\ref{apri1}).  Then the fundamental theorem provides: $$\begin{aligned}|\mathbbm{1}_{\{\rho\leq 1\}}\rho^{-\eta}(x)| & \leq |\epsilon^{-\eta}| + \int_{I}|\mathbbm{1}_{\{\rho\leq 1\}}\partial_{x}\rho^{-\eta}|dx.\end{aligned}$$  Since $K$ does not depend on time, lemma 4.4 gives that the right hand side is bounded uniformly in $x$ and $t$.

For the upper bound, again fix $t$ and now set $$\mathscr{F}(y,x)=\mathscr{E}\left(\frac{1+\tilde{\rho}}{1+y},\mu(t,x)\right)\quad\forall y\geq 0 $$ such that $y=1/\rho$.  Again (\ref{genenerg}) and (\ref{rho2}) provide that $\mathscr{F}(y,x)$ is continuous in $\rho$ uniformly with respect to $x$.  Additionally we find that both $\mathscr{F}(1/\tilde{\rho},x)=\mathscr{E}(\tilde{\rho},\mu)\geq C$ and that $\mathscr{F}(0,x)> C_{1}$ by applying (\ref{rho2}) to (\ref{genenerg}), which provides an admissible $\delta$.  Now, upon defining a function $\varpi = \rho (1+\tilde{\rho})/(\rho+1)$, then there exists a constant $C>0$ such that $$\mathscr{E}(\varpi,\mu)\leq C\mathscr{E}(\rho,\mu),$$ which can be shown using (\ref{genenerg}) and checking the formula for $|\rho-\tilde{\rho}|\leq \frac{\tilde{\rho}}{2}$, $\rho\leq \frac{\tilde{\rho}}{2}$ and $\rho\geq \frac{3}{2}\tilde{\rho}$ thanks to (\ref{rhopb}).  Then (\ref{apri1}) is enough to deduce that $$\int_{\mathbb{R}}\mathscr{F}(\rho(t,x)^{-1},x)dx\leq C.$$  Hence for any $x\in\mathbb{R}$ we can use lemma 4.2 setting $x_{0}=x$ such that there exists an $x_{1}\in [x-K,x+K]$ with $\rho(t,x_{1})\leq\epsilon^{-1}$.  Again notice that $K$ does not depend on $t$ since (\ref{apri1}) is uniform in time.  Then by the fundamental theorem and lemma 4.4 we obtain $$\begin{aligned}|\mathbbm{1}_{\{\rho\geq 1\}}\rho^{\sigma}(x)| & \leq |\epsilon^{-\sigma}| + \int_{I}|\mathbbm{1}_{\{\rho\geq 1\}}\partial_{x}\rho^{\sigma}|dx. \end{aligned}$$  Again since $K$ does not depend on time, lemma 4.4 gives the right side bounded uniformly in $x$ and $t$ which completes the proof of proposition 4.1. \end{proof}

We proceed by showing the important corollary to this proposition.

\newtheorem*{rho4}{Corollary}
\begin{rho4} Assume that (\ref{gamma1})-(\ref{rho2}) and (\ref{rhopb})-(\ref{rhomub}) are satisfied, then $$\rho\in L^{\infty}(0,T;\dot{H}^{1}(\mathbb{R})).$$ 
\end{rho4}

\begin{proof} Lemma 4.4 provides the appropriate framework.  Thus we will show the bound separately for the cases $\rho\leq 1$ and $\rho\geq 1$.

For $\rho\leq 1$ applying (\ref{rho2}), (\ref{rhopb}) and (\ref{rhomub}) we calculate \begin{equation}\begin{aligned}\label{bal6}\partial_{x}\rho^{-\eta} & =\rho^{-\eta -1}\partial_{x}\rho \\ &=\rho^{-\eta -1} \left(\frac{\partial_{x}p(\rho,\mu)-\partial_{\mu}p(\rho,\mu)\partial_{x}\mu}{\partial_{\rho}p(\rho,\mu)}\right)\\ &\leq \rho^{-\eta-1}\left(\frac{\partial_{x}p(\rho,\mu)}{C\rho^{\hat{\gamma}-1}}\right)-C\rho^{-\eta-\hat{\gamma}}\partial_{\mu}p(\rho,\mu)\partial_{x}\mu \\ &\leq C\rho^{-\underline{\alpha}\check{\gamma} -1/2}\partial_{x}p(\rho,\mu)-C\rho^{-\eta-\hat{\gamma}+\check{\gamma}+1}\left(\frac{\partial_{x}\mu}{\rho}\right).\end{aligned}\end{equation}  Squaring both sides gives $$\begin{aligned}(\partial_{x}\rho)^{2} &\leq\rho^{2+2\eta}\left( C\rho^{-\underline{\alpha}\check{\gamma} -1/2}\partial_{x}p(\rho,\mu)-C\rho^{-\eta-\hat{\gamma}+\check{\gamma}+1}\left(\frac{\partial_{x}\mu}{\rho}\right)\right)^{2}.\end{aligned}$$  Integrating, applying (\ref{rho2}) and utilizing H\"{o}lder's inequality yields, $$\begin{aligned}&\int_{\mathbb{R}}\mathbbm{1}_{\{\rho\leq 1\}}(\partial_{x} \rho)^{2}dx\leq\check{C}\int_{\mathbb{R}}\mathbbm{1}_{\{\rho\leq 1\}}\bigg|\frac{\partial_{x}p}{\sqrt{\rho}\rho^{\underline{\alpha}\check{\gamma}}}\bigg|^{2}dx-\tilde{C}\bigg(\int_{\mathbb{R}}\mathbbm{1}_{\{\rho\leq 1\}}\bigg|\frac{\partial_{x}p}{\sqrt{\rho}\rho^{\underline{\alpha}\check{\gamma}}}\bigg|^{2}dx \\ &\times \int_{\mathbb{R}}\mathbbm{1}_{\{\rho\leq 1\}}\bigg|\rho^{\check{\gamma}(1-\underline{\alpha})+\frac{1}{2}}\Big(\frac{\partial_{x}\mu}{\rho}\Big)\bigg|^{2} dx\bigg)^{\frac{1}{2}}+ C\int_{\mathbb{R}}\mathbbm{1}_{\{\rho\leq 1\}}\bigg|\rho^{\check{\gamma}(1-\underline{\alpha})+\frac{1}{2}}\Big(\frac{\partial_{x}\mu}{\rho}\Big)\bigg|^{2}dx \\& \qquad\qquad\qquad\quad \ \ \leq \check{C}_{0}\int_{\mathbb{R}}\mathbbm{1}_{\{\rho\leq 1\}}\bigg|\frac{\partial_{x}\psi(p)}{\sqrt{\rho}}\bigg|^{2}dx-\tilde{C}_{0}\bigg(\int_{\mathbb{R}}\mathbbm{1}_{\{\rho\leq 1\}}\bigg|\frac{\partial_{x}\psi(p)}{\sqrt{\rho}}\bigg|^{2}dx\\ &\times \bar{\varrho}^{\check{\gamma}(1-\underline{\alpha})+\frac{1}{2}}\int_{\mathbb{R}}\mathbbm{1}_{\{\rho\leq 1\}}|\rho^{-1}\partial_{x}\mu|^{2}dx\bigg)^{\frac{1}{2}}+C\overline{\varrho}^{\check{\gamma}(1-\underline{\alpha})+\frac{1}{2}}\int_{\mathbb{R}}\mathbbm{1}_{\{\rho\leq 1\}}|(\rho^{-1}\partial_{x}\mu)|^{2} dx \\ &\qquad\qquad\qquad\quad \ \ \leq C,\end{aligned}$$ which concludes the proof for $\rho\leq 1$.

For the case $\rho\geq 1$ we follow an almost identical calculation, except that now after applying  (\ref{rho2}), (\ref{rhopb}) and (\ref{rhomub}); (\ref{bal6}) becomes \begin{equation}\begin{aligned}\label{bal7} \partial_{x}\rho^{\sigma} & = \rho^{\sigma-1}\partial_{x}\rho \\ & = \rho^{\sigma-1} \left(\frac{\partial_{x}p(\rho,\mu)-\partial_{\mu}p(\rho,\mu)\partial_{x}\mu}{\partial_{\rho}p(\rho,\mu)}\right) \\ & \leq \rho^{\sigma-1} \left(\frac{\check{C}\partial_{x}p(\rho,\mu)}{\rho^{\check{\gamma}-1}}\right)-C\rho^{\sigma-\check{\gamma}}\partial_{\mu}p(\rho,\mu)\partial_{x}\mu\\ &\leq \check{C}\rho^{-\overline{\alpha}\hat{\gamma}-1/2}\partial_{x}p(\rho,\mu)-C\rho^{\sigma-\check{\gamma}+\hat{\gamma}+1}\Big(\frac{\partial_{x}\mu}{\rho}\Big). \end{aligned}\end{equation}  Squaring both sides now gives $$\begin{aligned}(\partial_{x}\rho)^{2} & \leq \rho^{2-2\sigma}\left(\check{C}\rho^{-\overline{\alpha}\hat{\gamma} -1/2} \partial_{x}p(\rho,\mu)-C\rho^{\hat{\gamma}(1-\overline{\alpha})+\frac{1}{2}}\Big(\frac{\partial_{x}\mu}{\rho}\Big)\right)^{2}.\end{aligned}$$  Again integrating and applying (\ref{rho2}) with H\"{o}lder's inequality establishes, $$\begin{aligned}&\int_{\mathbb{R}}\mathbbm{1}_{\{\rho\geq 1\}}(\partial_{x} \rho)^{2}dx\leq\hat{C}\int_{\mathbb{R}}\mathbbm{1}_{\{\rho\geq 1\}}\bigg|\frac{\partial_{x}p}{\sqrt{\rho}\rho^{\overline{\alpha}\hat{\gamma}}}\bigg|^{2}dx-\tilde{C}\bigg(\int_{\mathbb{R}}\mathbbm{1}_{\{\rho\geq 1\}}\bigg|\frac{\partial_{x}p}{\sqrt{\rho}\rho^{\overline{\alpha}\hat{\gamma}}}\bigg|^{2}dx \\ &\times \int_{\mathbb{R}}\mathbbm{1}_{\{\rho\geq 1\}}\bigg|\rho^{\hat{\gamma}(1-\overline{\alpha})+\frac{1}{2}}\Big(\frac{\partial_{x}\mu}{\rho}\Big)\bigg|^{2} dx\bigg)^{\frac{1}{2}}+ C\int_{\mathbb{R}}\mathbbm{1}_{\{\rho\geq 1\}}\bigg|\rho^{\hat{\gamma}(1-\overline{\alpha})+\frac{1}{2}}\Big(\frac{\partial_{x}\mu}{\rho}\Big)\bigg|^{2}dx \\& \qquad\qquad\qquad\quad \ \ \leq \hat{C}_{0}\int_{\mathbb{R}}\mathbbm{1}_{\{\rho\geq 1\}}\bigg|\frac{\partial_{x}\psi(p)}{\sqrt{\rho}}\bigg|^{2}dx-\tilde{C}_{0}\bigg(\int_{\mathbb{R}}\mathbbm{1}_{\{\rho\geq 1\}}\bigg|\frac{\partial_{x}\psi(p)}{\sqrt{\rho}}\bigg|^{2}dx\\ &\times \bar{\varrho}^{\hat{\gamma}(1-\overline{\alpha})+\frac{1}{2}}\int_{\mathbb{R}}\mathbbm{1}_{\{\rho\geq 1\}}|\rho^{-1}\partial_{x}\mu|^{2}dx\bigg)^{\frac{1}{2}}+C\overline{\varrho}^{\hat{\gamma}(1-\overline{\alpha})+\frac{1}{2}}\int_{\mathbb{R}}\mathbbm{1}_{\{\rho\geq 1\}}|(\rho^{-1}\partial_{x}\mu)|^{2} dx \\ &\qquad\qquad\qquad\quad \ \ \leq C,\end{aligned}$$ which due to Minkowski's inequality completes the proof.
\end{proof}  

\subsection{4.2 Bounds for the Velocity}

It is now possible to find bounds on the velocity by applying the uniform bounds achieved above.    

\newtheorem*{u1}{Proposition 4.2}
\begin{u1} Assume that (\ref{visc})-(\ref{rho2}) and (\ref{rhopb})-(\ref{rhomub}) are satisfied, then \begin{equation}\label{usobo} u\in L^{2}(0,T;H^{2}(\mathbb{R})) \quad\mathit{and}\quad \partial_{t}u \in L^{2}(0,T;L^{2}(\mathbb{R})).\end{equation}
\end{u1}

\begin{proof} First notice that the second estimate in (\ref{apri1}) in tandem with the uniform bounds on the density gives \begin{equation}\label{ubo1}\|u\|_{L^{\infty}(0,T;L^{2}(\mathbb{R}))}\leq C.\end{equation} Also notice that the uniform bounds on $\rho$ applied to (\ref{visc}) show that there exists a constant $C$ such that $\nu(\rho,\mu)^{-1}\leq C$.  That is, applying (\ref{rho2}) and (\ref{rhopb}) to (\ref{visc}) for $\rho\leq 1$ gives $\nu(\rho,\mu)\geq C\rho^{\hat{\gamma}-\underline{\alpha}\check{\gamma}}$ so that using the uniform bounds on $\rho$ provides \begin{equation}\label{nuinv1}\nu(\rho,\mu)^{-1}\leq C\rho^{\underline{\alpha}\check{\gamma}-\hat{\gamma}}\leq C\underline{\varrho}^{\underline{\alpha}\check{\gamma}-\hat{\gamma}}\leq C.\end{equation}  For $\rho\geq 1$ it follows in the same way that $\nu(\rho,\mu)\geq C\rho^{\check{\gamma}-\overline{\alpha}\hat{\gamma}}$ provides \begin{equation}\label{nuinv2}\nu(\rho,\mu)^{-1}\leq C\rho^{\overline{\alpha}\hat{\gamma}-\check{\gamma}}\leq C\underline{\varrho}^{\overline{\alpha}\hat{\gamma}-\check{\gamma}}\leq C.\end{equation}  Thus for all $\rho$ we have $\nu(\rho,\mu)^{-1}\leq C$, which when applied to (\ref{apri1}) yields \begin{equation}\label{ubo2}\|u\|_{L^{2}(0,T;H^{1}(\mathbb{R})}\leq C.\end{equation}  Further, observing the continuity equation with respect to (\ref{ubo2}) implies that $\partial_{t}\rho$ is bounded in $L^{2}((0,T)\times \mathbb{R})$ as denoted in the theorem.  
We proceed by controlling the following form of the momentum equation (after multiplication through by $\rho^{-1}$): \begin{equation}\label{momexpand}\partial_{t}u-\partial_{x}\left(\rho^{-1}\nu(\rho,\mu)\partial_{x}u\right)=-u\partial_{x}u-\rho^{-1}\partial_{x}p(\rho,\mu)-\nu(\rho,\mu)\partial_{x}u\partial_{x}\rho^{-1}.\end{equation}  We want to control the right side of (\ref{momexpand}) in such a way as to apply classical regularity results for parabolic equations. 

Consider first the second term on the right in (\ref{momexpand}).  This term is bounded in $L^{\infty}(0,T;L^{2}(\mathbb{R}))$ as an immediate consequence of proposition 4.1, the corollary, and condition (\ref{rho2}).  This follows since (\ref{rho2}) gives $p(\rho,\mu)\leq C\rho^{\hat{\gamma}}$ for $\rho\geq 1$ and $p(\rho,\mu)\leq C\rho^{\check{\gamma}}$ for $\rho\leq 1$.  Then we can expand the pressure term as $\rho^{\hat{\gamma}-2}\partial_{x}\rho$ and $\rho^{\check{\gamma}-2}\partial_{x}\rho$, such that for $\rho\geq 1$ the corollary and proposition 4.1 provide that $$\begin{aligned}\int_{\mathbb{R}}\mathbbm{1}_{\{\rho\geq 1\}}|\rho^{-1}\partial_{x}p(\rho,\mu)|^{2}dx &\leq C\left(\esssup_{\{x\in\mathbb{R}:\rho\geq 1\}}|\rho^{2\hat{\gamma}-4}|\right)\left(\int_{\mathbb{R}}\mathbbm{1}_{\{\rho\geq 1\}}|\partial_{x}\rho|^{2} dx\right)\\ &\leq C,\end{aligned}$$ and likewise for $\rho\leq 1$ the corollary and proposition 4.1 give $$\begin{aligned}\int_{\mathbb{R}}\mathbbm{1}_{\{\rho\leq 1\}}|\rho^{-1}\partial_{x}p(\rho,\mu)|^{2}dx &\leq C\left(\esssup_{\{x\in\mathbb{R}:\rho\leq 1\}}|\rho^{2\check{\gamma}-4}|\right)\left(\int_{\mathbb{R}}\mathbbm{1}_{\{\rho\leq 1\}}|\partial_{x}\rho|^{2} dx\right)\\ &\leq C.\end{aligned}$$  Minkowski's inequality then provides the result.

For the third term on the right we again use the fact from above that $\nu(\rho,\mu)^{-1}\leq C$, and so because of the uniform bounds on $\rho$ we acquire $$|\nu(\rho,\mu)\partial_{x}u \partial_{x}\rho^{-1}| \leq C| \partial_{x}u\partial_{x}\rho|.$$  Hence, due to results on parabolic equations (see \cite{LSU}) we have reduced the problem to finding for the third term on the right in (\ref{momexpand}) that $\rho_{x}u_{x}$ is bounded in $L^{2}(0,T;L^{4/3}(\mathbb{R}))$ and similarly for the first term on the right that $uu_{x}$ is in $L^{2}(0,T;L^{4/3}(\mathbb{R}))$.  To get this, we adapt a subtle calculation from \cite{MV3} that relies on correctly weighting the norms in order to establish that $u_{x}\in L^{2}(0,T;L^{\infty}(\mathbb{R}))$.  That is, using H\"{o}lder's inequality we can write: \begin{equation}\begin{aligned}\label{suppa}\|uu_{x}\|&_{L^{2}(0,T;L^{4/3}(\mathbb{R}))}+\|\rho_{x}u_{x}\|_{L^{2}(0,T;L^{4/3}(\mathbb{R}))}\\ &\leq \big\{\|u\|_{L^{\infty}(0,T;L^{2}(\mathbb{R}))}+\|\rho_{x}\|_{L^{\infty}(0,T;L^{2}(\mathbb{R}))}\big\}\|u_{x}\|_{L^{2}(0,T;L^{4}(\mathbb{R}))}.\end{aligned}\end{equation} 
 
Now for some function $f$ with constant $a\in\mathbb{R}$ we have $(f^{a})_{x}=af^{a-1}f_{x}$ such that we may infer by H\"{o}lder's inequality that \begin{equation}\label{help}\|\partial_{x}(f^{3/2})\|_{L^{1}(\mathbb{R})}\leq C\|f^{1/2}\|_{L^{4}(\mathbb{R})}\|f_{x}\|_{L^{4/3}(\mathbb{R})}.\end{equation}  Next we infer a bound in $L^{8/3}(\mathbb{R})$ given by $$\|f^{3/2}\|_{L^{8/3}(\mathbb{R})}\leq C\|f^{3/2}\|^{1/2}_{L^{4/3}(\mathbb{R})}\|\partial_{x}(f^{3/2})\|^{1/2}_{L^{1}(\mathbb{R})},$$ which follows since $$\|\partial_{x}(f^{3/2})\|^{1/2}_{L^{1}(\mathbb{R})}\geq C\|f^{3/2}\|^{1/2}_{L^{\infty}(\mathbb{R})}.$$ Thus invoking (\ref{help}) we can write $$\begin{aligned}\|f\|^{3/2}_{L^{4}(\mathbb{R})} &\leq C\|f\|^{3/4}_{L^{2}(\mathbb{R})}\|f_{x}\|^{1/2}_{L^{4/3}(\mathbb{R})}\|\sqrt{f}\|^{1/2}_{L^{4}(\mathbb{R})}\\ &\leq C\|f\|_{L^{2}(\mathbb{R})}\|f_{x}\|^{1/2}_{L^{4/3}(\mathbb{R})},\end{aligned}$$ where both sides raised to the power $n=2/3$ clearly implies that $$\|f\|_{L^{4}(\mathbb{R})}\leq C\|f\|^{2/3}_{L^{2}(\mathbb{R})}\|f\|^{1/3}_{W^{1,4/3}}.$$  

Hence, if we set $u_{x}=f$ then (\ref{suppa}) leads to $$\begin{aligned}\big\{\|u\|&_{L^{\infty}(0,T;L^{2}(\mathbb{R}))}+\|\rho_{x}\|_{L^{\infty}(0,T;L^{2}(\mathbb{R}))}\big\}\|u_{x}\|_{L^{2}(0,T;L^{4}(\mathbb{R}))} \\ &\leq C\|u\|_{L^{\infty}(0,T;L^{2}(\mathbb{R}))}\|u_{x}\|^{2/3}_{L^{2}(0,T;L^{2}(\mathbb{R}))}\|u_{x}\|^{1/3}_{L^{2}(0,T;W^{1,4/3}(\mathbb{R}))} \\  &\quad +C\|\rho_{x}\|_{L^{\infty}(0,T;L^{2}(\mathbb{R}))}\|u_{x}\|^{2/3}_{L^{2}(0,T;L^{2}(\mathbb{R}))}\|u_{x}\|^{1/3}_{L^{2}(0,T;W^{1,4/3}(\mathbb{R}))}\\ &\leq C\|u_{x}\|^{1/3}_{L^{2}(0,T;W^{1,4/3}(\mathbb{R}))},\end{aligned}$$ since $u$ and $\rho_{x}$ are given by (\ref{ubo2}) and the corollary.  But then regularity results (see theorem 4.2 in Chapter III of \cite{LSU}) for equations of the form (\ref{momexpand}), given the bounds established above and that $\nu(\rho,\mu)$ is a coefficient function satisfying uniform parabolicity, imply that since $$\|\partial_{x}u\|_{L^{2}(0,T;W^{1,4/3}(\mathbb{R}))}\leq C + C\|u_{x}\|^{1/3}_{L^{2}(0,T;W^{1,4/3}(\mathbb{R}))},$$ we have \begin{equation}\label{fourthird}\|\partial_{x}u\|_{L^{2}(0,T;W^{1,4/3}(\mathbb{R}))}\leq C.\end{equation}   

Now, we want to show that \begin{equation}\label{udiv}u_{x}\in L^{2}(0,T;L^{\infty}(\mathbb{R})).\end{equation}  Indeed for any $x\in\mathbb{R}$ and $t\in [0,T]$ if we set $\varsigma=u_{x}$ from lemma 4.5 (which is given following this proof) and notice that $$\|u_{x}(t,x)\|^{2}\leq 2\|u_{x}(t,\cdot)\|^{2}_{L^{2}(\mathbb{R})}+\|u_{xx}(t,\cdot)\|^{2}_{L^{4/3}(\mathbb{R})}$$ for any $t\in [0,T]$, then integrating in time gives (\ref{udiv}).

It follows as a consequence that the entire right hand side of (\ref{momexpand}) is bounded in $L^{2}(0,T;L^{2}(\mathbb{R}))$.  Applying the classical regularity results for parabolic equations then yields: $$\|u\|_{L^{2}(0,T;H^{2}(\mathbb{R}))}\leq C\qquad\mathrm{and}\qquad\|\partial_{t}u\|_{L^{2}(0,T;L^{2}(\mathbb{R}))}\leq C.$$ \end{proof}

\newtheorem*{bind}{Lemma 4.5}
\begin{bind}
Let $\varsigma\in L^{2}(\mathbb{R})$ with $\partial_{x}\varsigma\in L^{1}_{loc}(\mathbb{R})$.  Then for any $x\in\mathbb{R}$ $$|\varsigma(x)|^{2}\leq 2\|\varsigma\|_{L^{2}(\mathbb{R})}^{2}+2\left(\int_{I}|\partial_{x}\varsigma|dz\right)^{2},$$ where $I=[x,x+1]$.
\end{bind}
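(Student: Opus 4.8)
The plan is to recognize this as the standard one-dimensional Sobolev embedding $W^{1,1}_{loc} \hookrightarrow L^{\infty}_{loc}$ localized to the unit interval, and to prove it by combining the fundamental theorem of calculus with an averaging argument over $I=[x,x+1]$. Since $\partial_{x}\varsigma\in L^{1}_{loc}(\mathbb{R})$, the function $\varsigma$ lies in $W^{1,1}_{loc}(\mathbb{R})$ and therefore admits an absolutely continuous representative on every compact interval; it is this representative whose pointwise value $\varsigma(x)$ we control throughout.

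First I would fix $x\in\mathbb{R}$ and, for any $y\in I=[x,x+1]$, write the fundamental theorem of calculus in the form $\varsigma(x)=\varsigma(y)-\int_{x}^{y}\partial_{z}\varsigma\,dz$. Taking absolute values and enlarging the domain of integration to all of $I$ gives
$$|\varsigma(x)|\leq|\varsigma(y)|+\int_{I}|\partial_{z}\varsigma|\,dz.$$
Squaring and applying the elementary inequality $(a+b)^{2}\leq 2a^{2}+2b^{2}$ then yields
$$|\varsigma(x)|^{2}\leq 2|\varsigma(y)|^{2}+2\left(\int_{I}|\partial_{z}\varsigma|\,dz\right)^{2},$$
an inequality whose left-hand side is independent of the auxiliary point $y$.

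Next I would average this bound over $y$ by integrating in $y$ across $I$; since $I$ has unit length the left side is unchanged, and I obtain
$$|\varsigma(x)|^{2}\leq 2\int_{I}|\varsigma(y)|^{2}\,dy+2\left(\int_{I}|\partial_{z}\varsigma|\,dz\right)^{2}.$$
Finally, bounding the local integral by the global one, $\int_{I}|\varsigma(y)|^{2}\,dy\leq\|\varsigma\|_{L^{2}(\mathbb{R})}^{2}$, produces exactly the claimed estimate. The only genuine subtlety is the measure-theoretic justification of the pointwise evaluation and of the fundamental theorem of calculus under the weak hypotheses $\varsigma\in L^{2}$ and $\partial_{x}\varsigma\in L^{1}_{loc}$; once the absolutely continuous representative is selected the remaining manipulations are entirely elementary, so I do not anticipate any serious obstacle here.
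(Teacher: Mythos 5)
Your proof is correct and follows essentially the same route as the paper: the fundamental theorem of calculus on $[x,y]$ for $y\in I$, the bound $(a+b)^{2}\leq 2a^{2}+2b^{2}$, and an average over the auxiliary point $y$ in the unit interval $I$ followed by enlarging $\int_{I}|\varsigma|^{2}dy$ to $\|\varsigma\|_{L^{2}(\mathbb{R})}^{2}$. If anything, your write-up is slightly more careful than the paper's (which loosely says ``integrating over $\mathbb{R}$ in $y$'' where integration over $I$ is meant), and your remark about selecting the absolutely continuous representative is a welcome precision.
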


\begin{proof}  It follows by the fundamental theorem that $$|\varsigma(x)| \leq |\varsigma(y)|+\int_{x}^{y}|\partial_{x}\varsigma|dz \leq |\varsigma(y)|+\int_{I}|\partial_{x}\varsigma|dz,$$ for any $y\in I$.  Squaring both sides and integrating over $\mathbb{R}$ in $y$ yields: $$|\varsigma(x)|^{2}\leq 2 \|\varsigma\|^{2}_{L^{2}(\mathbb{R})}+2\left(\int_{I}|\partial_{x}\varsigma|dz\right)^{2}.$$\end{proof}

\subsection{4.3 Bounds on the Mass Fraction}

All that remains in order to conclude the proof of the existence half of the theorem is to establish the bounds on $\mu$.  However, this is now an easy consequence of the bounds we have already established above.

\newtheorem*{mu1}{Lemma 4.6}
\begin{mu1} Given proposition 4.1 and 4.2 there exist constants such that, $$\|\mu_{x}\|_{L^{\infty}(0,T;L^{\infty}(\mathbb{R}))} \leq C\quad \mathit{and}\quad \|\partial_{t}\mu\|_{L^{\infty}(0,T;L^{2}(\mathbb{R}))}\leq C.$$
\end{mu1}

\begin{proof}  We have from lemma 4.4 that \begin{equation}\|\rho^{-1/2}\partial_{x}\mu\|_{L^{\infty}(0,T;L^{\infty}(\mathbb{R}))}\leq C,\end{equation} and so thanks to the uniform bounds on the density from Proposition 4.1, this yields that $\partial_{x}\mu$ is in $L^{\infty}(0,T;L^{\infty}(\mathbb{R}))$.  Now using (\ref{ubo1}) and the above with (\ref{ttt}) we find that $\partial_{t}\mu$ is in $L^{\infty}(0,T;L^{2}(\mathbb{R}))$.  \end{proof}

\subsection{4.4 Proof of the Existence Half of the Theorem}

We now apply the preceeding results in \textsection{4} in order to prove the existence theorem.

\begin{proof} [Proof of existence half of the theorem] In view of the \emph{a priori} estimates that we have now, the only difficulty that remains is to deal with the fact that $\nu$ is not uniformly bounded by below with respect to $\rho$.  This is needed to apply the short-existence result of Solonnikov (proposition 2.1).  To solve this problem let us fix any $T>0$.  Then we define an approximation to $\nu$ by, $$\tilde{\nu}(y,z)=\begin{cases}\nu(y,z) \ & \mathrm{if}\quad y\geq \frac{\underline{\varrho}(T)}{2}\\  \nu\left(\frac{\underline{\varrho}(T)}{2},z\right) \ & \mathrm{if}\quad y\leq \frac{\underline{\varrho}(T)}{2}\end{cases}$$ where $\underline{\varrho}(T)$ is defined by proposition 4.1.  Now let $(\tilde{\rho},\tilde{u},\tilde{\mu})$ be a strong solution of (\ref{rrr})-(\ref{ttt}), where $\nu$ is replaced by $\tilde{\nu}$; giving $$\begin{aligned}\partial_{t}\tilde{\rho}+\partial_{x}&(\tilde{\rho} \tilde{u})=0,\\ \partial_{t}(\tilde{\rho} \tilde{u})+\partial_{x}(\tilde{\rho}\tilde{u}^{2})+\partial_{x}p(\tilde{\rho},&\tilde{\mu})-\partial_{x}(\tilde{\nu}(\tilde{\rho},\tilde{\mu})\partial_{x}\tilde{u})=0, \\ \partial_{t}(\tilde{\rho}\tilde{\mu})+\partial_{x}&(\tilde{\rho}\tilde{u}\tilde{\mu})=0. \end{aligned}$$  By (\ref{visc}), (\ref{rho2}), and (\ref{rhopb}) the approximate function $\tilde{\nu}$ is bounded from below, thus proposition 2.1 provides that such a solution exists for all $t\in (0,T_{s})$.  Consider $\tilde{T}\leq T$ the biggest time such that $$\inf_{x}(\tilde{\rho}(t,\cdot))\geq \frac{\underline{\varrho}(T)}{2}.$$  Then on $[0,\tilde{T}]$, it follows that $\tilde{\nu}=\nu$.  Now assume that $\tilde{T}<T$.  From proposition 4.1, on $[0,\tilde{T}]$ $$\inf_{x}\tilde{\rho}(t,\cdot)\geq \underline{\varrho}(T)>\frac{\underline{\varrho}(T)}{2},$$ which contradicts the fact that $\tilde{T}<T$.  Hence we have constructed the solution of (\ref{rrr})-(\ref{ttt}) up to time $T$, and this for any $T>0$, which completes the proof.\end{proof}

\section{\protect\centering $\S 5$ Establishing the Uniqueness Theorem} 

Now we address the uniqueness half of the theorem.  Thanks to \cite{S1} this result follows fairly directly.  

\newtheorem*{th2}{Theorem}
\begin{th2} 
Let  $\psi '' (p)$, $\partial_{\rho\rho}p(\rho,\mu)$ and $\partial_{\rho\mu}p(\rho,\mu)$ be locally bounded.  Then a solution of (\ref{rrr})-(\ref{ttt}) verifying proposition 4.1, proposition 4.2, and lemma 4.6 is uniquely determined.
\end{th2}

\begin{proof}  Let $(\rho_{1}, u_{1},\mu_{1})$ and $(\rho_{2},u_{2},\mu_{2})$ be two solutions to the system (\ref{rrr})-(\ref{ttt}), and define $\chi=\mu_{1}-\mu_{2}$, $\tau=\rho_{1}-\rho_{2}$, $\zeta=u_{1}-u_{2}$, $p_{\ell}=p(\rho_{1},\mu_{1})-p(\rho_{2},\mu_{2})$ and $\nu_{\ell}=\nu(\rho_{1},\mu_{1})-\nu(\rho_{2},\mu_{2})$ such that from (\ref{rrr})-(\ref{ttt}) we can write: $$\begin{aligned}&\partial_{t}\tau + \partial_{x}(\rho_{1}u_{1}-\rho_{2}u_{2})=0, \\  \rho_{1}\partial_{t}u_{1}-\rho_{2}\partial_{t} u_{2} +\rho_{1} u_{1}&\partial_{x}u_{1}-\rho_{2}u_{2}\partial_{x}u_{2} +\partial_{x}p_{\ell} - \partial_{x}(\nu_{1}\partial_{x}u_{1}-\nu_{2}\partial_{x}u_{2}) = 0, \\ \partial_{t} &\chi+ (u_{1}\partial_{x}\mu_{1} -  u_{2} \partial_{x} \mu_{2}) = 0.\end{aligned}$$  By rearranging we get \begin{align}\label{rrr3}& \quad \partial_{t}\tau +\partial_{x}(\tau u_{1}+\rho_{2}\zeta)=0,\\ \rho_{1}(\partial_{t}\zeta + u_{1}&\partial_{x}\zeta+\zeta\partial_{x} u_{2})+\tau(\partial_{t}u_{2}+u_{2}\partial_{x}u_{2}) \notag \\  \label{sss3} & \quad\quad\quad +\partial_{x}p_{\ell}-\partial_{x}(\nu_{\ell}\partial_{x}u_{1})-\partial_{x}(\nu_{2}\partial_{x}\zeta)=0,\\ \label{ttt3} & \quad\partial_{t}\chi+\zeta\partial_{x}\mu_{1}+u_{2}\partial_{x}\chi =0. \end{align}    

First let us consider equation (\ref{rrr3}).  Here we multiply through by $\tau$ and integrate in $x$.  To begin with, note that the first term on the left satisfies \begin{equation}\label{mass1}\int_{\mathbb{R}}\tau\partial_{t}\tau dx=\frac{1}{2}\int_{\mathbb{R}}\partial_{t}\tau^{2}dx.\end{equation} For the $(\tau u_{1})_{x}$ term we use proposition 4.2 as applied in (\ref{udiv}) by setting $u=u_{1}$ to see that \begin{equation}\label{mass2}\Big|\int_{\mathbb{R}}\tau (\tau u_{1})_{x}dx\Big|\leq \frac{1}{2}\|\tau^{2}\partial_{x}u_{1}\|_{L^{1}(\mathbb{R})} \leq \frac{1}{2}\|\tau\|^{2}_{L^{2}(\mathbb{R})}\|\partial_{x}u_{1}\|_{L^{\infty}(\mathbb{R})} \leq B_{1}(t)\|\tau\|^{2}_{L^{2}(\mathbb{R})}.\end{equation}  For the $(\rho_{2}\zeta)_{x}$ term notice that we can write: $$\begin{aligned}&\Big|\int_{\mathbb{R}}\tau\partial_{x}(\rho_{2}\zeta)dx\Big|\leq\Big|\int_{\mathbb{R}}\tau\rho_{2}\partial_{x}\zeta dx\Big|+\Big|\int_{\mathbb{R}}\tau\zeta\partial_{x}\rho_{2}dx \Big|.\end{aligned}$$  Applying proposition 4.1 and Cauchy's inequality to the first term on the right provides, \begin{equation}\begin{aligned}\label{mass3}\Big|\int_{\mathbb{R}}\tau\rho_{2}\partial_{x}\zeta dx \Big|& \leq C \int_{\mathbb{R}}|\tau \partial_{x}\zeta| dx \leq C\|\tau\|_{L^{2}(\mathbb{R})}\|\zeta_{x}\|_{L^{2}(\mathbb{R})} \\ & \leq  C^{2}(4\epsilon_{1})^{-1}\|\tau\|^{2}_{L^{2}(\mathbb{R})}+ \epsilon_{1}\|\zeta_{x}\|^{2}_{L^{2}(\mathbb{R})}.\end{aligned}\end{equation}  For the second term on the right H\"{o}lder's inequality with the corollary implies that \begin{equation}\begin{aligned}\label{mass4}\Big|\int_{\mathbb{R}} \tau\zeta\partial_{x}\rho_{2}dx \Big| & \leq \left(\int_{\mathbb{R}}|\tau|^{2} dx\right)^{1/2}\left(\int_{\mathbb{R}}|\partial_{x}\rho_{2}|^{2}dx\right)^{1/2} \left(\esssup_{\mathbb{R}}|\zeta|\right) \\ & \leq C\left(\int_{\mathbb{R}}|\tau|^{2} dx\right)^{1/2}\left(\esssup_{\mathbb{R}}|\zeta|\right).\end{aligned}\end{equation}  Now we utilize lemma 4.5 by setting $\varsigma=\zeta$.  Since $|\zeta|\leq |u_{1}|+|u_{2}|$ the bounds in (\ref{ubo1}) provide that $\zeta\in L^{\infty}(0,T;L^{2}(\mathbb{R}))$.  Furthermore, proposition 4.2 gives that since $|\zeta_{x}|^{2}\leq 2|\partial_{x}u_{1}|^{2}+2|\partial_{x}u_{2}|^{2}$ we have $\zeta_{x}\in L^{2}(0,T;L^{2}(\mathbb{R}))$.  Thus noticing that $\|\zeta\|_{L^{1}(I)}\leq \|\zeta\|_{L^{2}(I)}$ since $|I|=1$ from lemma 4.5, it follows that $$|\zeta(x)|\leq \|\zeta\|_{L^{2}(\mathbb{R})}+\|\zeta_{x}\|_{L^{1}(I)}\leq \|\zeta\|_{L^{2}(\mathbb{R})}+\|\zeta_{x}\|_{L^{2}(\mathbb{R})},$$ allowing us to deduce, $$\|\zeta\|_{L^{\infty}(\mathbb{R})}\leq \|\zeta\|_{L^{2}(\mathbb{R})}+\|\partial_{x}\zeta\|_{L^{2}(\mathbb{R})}.$$ By Cauchy's inequality this finally yields \begin{equation}\begin{aligned}\label{mass5} C\|\tau&\|_{L^{2}(\mathbb{R})}\|\zeta\|_{L^{\infty}(\mathbb{R})} \\ &\leq \epsilon_{2} \|\zeta_{x}\|^{2}_{L^{2}(\mathbb{R})}+ \bigg\{\frac{C^{2}}{4\epsilon_{2}}+\frac{C}{2}\bigg\}\left(\|\tau\|^{2}_{L^{2}(\mathbb{R})}+\|\zeta\|^{2}_{L^{2}(\mathbb{R})}\right).\end{aligned}\end{equation}  Thus combining (\ref{mass1}), (\ref{mass2}) and (\ref{mass5}) allows us to write for (\ref{rrr3}): \begin{equation}\begin{aligned}\label{mass6} \frac{1}{2}\frac{d}{dt}\int_{\mathbb{R}}&\tau^{2}dx -\big\{\epsilon_{1}+\epsilon_{2} \big\} \int_{\mathbb{R}}(\partial_{x}\zeta)^{2} dx \\& \leq \bigg\{B_{1}(t)+\frac{C^{2}}{4\epsilon_{1}}+\frac{C^{2}}{4\epsilon_{2}}+\frac{C}{2}\bigg\}\left(\|\tau\|^{2}_{L^{2}(\mathbb{R})}+\|\zeta\|^{2}_{L^{2}(\mathbb{R})}\right).\end{aligned}\end{equation}  

Next we want to multiply (\ref{sss3}) through by $\zeta$ and integrate in $\mathbb{R}$.  For the first two terms in the first part of (\ref{sss3}) we find: \begin{equation}\begin{aligned}\label{moment1}\int_{\mathbb{R}}\rho_{1}\zeta (\partial_{t}\zeta + u_{1}\partial_{x}\zeta)dx & = \int_{\mathbb{R}}\frac{\rho_{1}}{2}(\partial_{t}\zeta^{2}+u_{1}\partial_{x}\zeta^{2})dx \\ & = \frac{1}{2}\frac{d}{dt}\int_{\mathbb{R}}\rho_{1}\zeta^{2} dx - \int_{\mathbb{R}}\frac{\zeta^{2}}{2}(\partial_{t}\rho_{1}+\partial_{x}(\rho_{1}u_{1}))dx \\ & = \frac{1}{2}\frac{d}{dt}\int_{\mathbb{R}}\rho_{1}\zeta^{2} dx.\end{aligned}\end{equation}   For the $\rho_{1}\zeta\partial_{x}u_{2}$ term in (\ref{sss3}) we use the same calculation given in (\ref{mass2}) which is formulated in (\ref{udiv}) by setting $u=u_{2}$ such that, \begin{equation}\label{moment2} \Big|\int_{\mathbb{R}}\rho_{1}\zeta^{2}\partial_{x}u_{2}dx\Big| \leq C\|\zeta\|^{2}_{L^{2}(\mathbb{R})}\|\partial_{x}u_{2}\|_{L^{\infty}(\mathbb{R})} \leq B_{2}(t)\|\zeta\|^{2}_{L^{2}(\mathbb{R})}.\end{equation} Now, for the $\tau(\partial_{t}u_{2}+u_{2}\partial_{x}u_{2})$ part of (\ref{sss3}) we utilize a calculation similar to that employed for the term in (\ref{mass4}).  Here we simply substitute the $\partial_{x}\rho_{2}$ term from (\ref{mass4}) with $\omega = \partial_{t}u_{2} + u_{2}\partial_{x}u_{2}$, noting that proposition 4.2 along with (\ref{udiv}) assure that $\omega$ is bounded in $L^{2}(0,T;L^{2}(\mathbb{R}))$.  Thus we obtain \begin{equation}\begin{aligned}\label{moment3} B(t)\|\tau\|_{L^{2}(\mathbb{R})}&\|\zeta\|_{L^{\infty}(\mathbb{R})}\leq \epsilon_{3}\|\zeta_{x}\|^{2}_{L^{2}(\mathbb{R})}+ \epsilon_{3}^{-1}B_{3}(t)\left(\|\tau\|^{2}_{L^{2}(\mathbb{R})}+\|\zeta\|^{2}_{L^{2}(\mathbb{R})}\right),\end{aligned}\end{equation} where here $B_{3}(t)=\epsilon_{3}B(t)/2+B(t)^{2}/4$.

Next consider the pressure term $p_{\ell}$ in (\ref{sss3}).  Here set $$\int_{\mathbb{R}} \zeta \partial_{x}p_{\ell}dx = -\int_{\mathbb{R}}\Big\{p(\rho_{1},\mu_{1})-p(\rho_{2},\mu_{2})\Big\}\partial_{x}\zeta dx.$$  The uniform bounds on $\rho$ along with (\ref{rhopb}) and (\ref{rhomub}) give that $|\partial_{\rho}p(\rho,\mu)|\leq C$ and $|\partial_{\mu}p(\rho,\mu)|\leq C$, and so $$|p(\rho_{2},\mu_{2})-p(\rho_{1},\mu_{1})|\leq C(|\tau|+|\chi|).$$  Thus $$\int_{\mathbb{R}} \zeta \partial_{x}p_{\ell}dx \leq C\int_{\mathbb{R}}(|\tau|+|\chi|)\partial_{x}\zeta dx,$$ which gives by Cauchy's inequality, \begin{equation}\label{moment4}\int_{\mathbb{R}} \zeta \partial_{x}p_{\ell}dx \leq 2\epsilon_{4}\int_{\mathbb{R}}(\partial_{x}\zeta)^{2}dx+\frac{C^{2}}{4\epsilon_{4}}\int_{\mathbb{R}}|\tau|^{2}dx+\frac{C^{2}}{4\epsilon_{4}}\int_{\mathbb{R}}|\chi|^{2}dx.\end{equation} 

Finally we consider the viscosity terms in (\ref{sss3}).  For the  $(\nu_{\ell}\partial_{x}u_{1})_{x}$ term $$-\int_{\mathbb{R}}\zeta\partial_{x}(\nu_{\ell}\partial_{x}u_{1})dx = \int_{\mathbb{R}}\nu_{\ell}\partial_{x}\zeta\partial_{x}u_{1}dx.$$  Since $\psi '' (p)$, $\partial_{\rho\rho}p(\rho,\mu)$ and $\partial_{\rho\mu}p(\rho,\mu)$ are locally bounded, then from (\ref{visc}) we have $\nu_{\ell}\leq C(|\tau|+|\chi|)$, which gives  $$-\int_{\mathbb{R}}\zeta\partial_{x}(\nu_{\ell}\partial_{x}u_{1})dx\leq C\int_{\mathbb{R}}(|\tau|+|\chi|)\partial_{x}\zeta\partial_{x}u_{1} dx,$$ and leads to, $$-\int_{\mathbb{R}}\zeta(\nu_{\ell}\partial_{x}u_{1})_{x}dx\leq 2\epsilon_{5}\int_{\mathbb{R}}\zeta_{x}^{2}dx+\frac{C^{2}}{4\epsilon_{5}}\int_{\mathbb{R}}|\partial_{x}u_{1}|^{2}\tau^{2}dx+\frac{C^{2}}{4\epsilon_{5}}\int_{\mathbb{R}}|\partial_{x}u_{1}|^{2}\chi^{2}dx.$$  Next we again use the fact that $\partial_{x}u_{1}$ is bounded in $L^{2}(0,T;L^{\infty}(\mathbb{R}))$ by (\ref{udiv}).  It subsequently follows that, \begin{equation}\label{moment5}-\|\zeta(\nu_{\ell}\partial_{x}u_{1})_{x}\|_{L^{1}(\mathbb{R})}\leq 2\epsilon_{5}\|\partial_{x}\zeta\|^{2}_{L^{2}(\mathbb{R})}+\frac{B_{4}(t)}{2\epsilon_{5}}(\|\tau\|^{2}_{L^{2}(\mathbb{R})}+\|\chi\|^{2}_{L^{2}(\mathbb{R})}).\end{equation}  For the $(\nu_{2}\zeta_{x})_{x}$ term we simply multiply through by $\zeta$ and integrate, yielding \begin{equation}\label{moment6}-\int_{\mathbb{R}}\zeta\partial_{x}(\nu_{2}\partial_{x}\zeta)dx=\int_{\mathbb{R}}\nu_{2}(\partial_{x}\zeta)^{2}dx\geq C\int_{\mathbb{R}}(\partial_{x}\zeta)^{2}dx\end{equation} when using that $\nu_{2}\geq C$.

Hence combining (\ref{moment1})-(\ref{moment6}) we have: \begin{equation}\begin{aligned}\label{moment7} \frac{1}{2}&\frac{d}{dt}\int_{\mathbb{R}}\rho_{1}\zeta^{2} dx+(C-\epsilon_{3}-2\epsilon_{4}-2\epsilon_{5})\int_{\mathbb{R}}|\partial_{x}\zeta|^{2} dx \\ & \quad\quad \ \leq \Big\{B_{2}(t)+ \frac{B_{3}(t)}{\epsilon_{3}}+\frac{B_{4}(t)}{4\epsilon_{5}}+\frac{C^{2}}{4\epsilon_{4}}\Big\}\left(\|\tau\|^{2}_{L^{2}(\mathbb{R})}+\|\zeta\|^{2}_{L^{2}(\mathbb{R})}\right).\end{aligned}\end{equation}

All that is left is to find a compatible form of equation (\ref{ttt3}).  Here we multiply through by $\chi$ and integrate in $\mathbb{R}$ such that the first term gives \begin{equation}\label{transport1}\int_{\mathbb{R}}\chi\partial_{t}\chi dx = \frac{1}{2}\frac{d}{dt}\int_{\mathbb{R}}\chi^{2} dx.\end{equation}  The second term in (\ref{ttt3}) is treated in a similar way as (\ref{mass4}) and (\ref{moment3}), where here we have \begin{equation}\begin{aligned}\Big|\int_{\mathbb{R}} \chi\zeta\partial_{x}\mu_{1}dx \Big| & \leq \left(\int_{\mathbb{R}}|\chi|^{2} dx\right)^{1/2}\left(\int_{\mathbb{R}}|\zeta|^{2}dx\right)^{1/2} \left(\esssup_{\mathbb{R}}|\partial_{x}\mu_{1}|\right) \\ & \leq C\left(\int_{\mathbb{R}}|\chi|^{2} dx\right)^{1/2}\left(\int_{\mathbb{R}}|\zeta|^{2}\right)^{1/2}.\end{aligned}\end{equation}  Thus we obtain, \begin{equation}\begin{aligned}\label{transport2} C\|\chi\|_{L^{2}(\mathbb{R})}&\|\zeta\|_{L^{2}(\mathbb{R})}\leq \frac{C}{2}\left(\|\chi\|^{2}_{L^{2}(\mathbb{R})}+\|\zeta\|^{2}_{L^{2}(\mathbb{R})}\right).\end{aligned}\end{equation}  For the last term in (\ref{ttt3}) we use (\ref{udiv}) with $u=u_{2}$ to see \begin{equation}\label{transport3} \Big|\int_{\mathbb{R}} \chi u_{2}\partial_{x}\chi dx \Big| \leq C \|\chi\|^{2}_{L^{2}(\mathbb{R})}\|\partial_{x}u_{2}\|_{L^{\infty}(\mathbb{R})}\leq B_{5}(t)\|\chi\|^{2}_{L^{2}(\mathbb{R})}.\end{equation} Thus putting (\ref{transport1}), (\ref{transport2}) and (\ref{transport3}) together yields, \begin{equation}\label{transport4} \frac{1}{2}\frac{d}{dt}\int_{\mathbb{R}}\chi^{2} dx \leq\Big\{C/2+B_{5}(t)\Big\}\left(\|\chi\|^{2}_{L^{2}(\mathbb{R})}+\|\zeta\|^{2}_{L^{2}(\mathbb{R})}\right).\end{equation}

Finally, combining (\ref{mass6}), (\ref{moment6}) and (\ref{transport4}) along with defining, $$\begin{aligned} \mathscr{C}=C-\epsilon_{1}+\epsilon_{2}+&\epsilon_{3}+2\epsilon_{4}+2\epsilon_{5}, \\ \mathscr{B}_{1}(t)=B_{1}(t)+C^{2}(4\epsilon_{1})&^{-1}+C^{2}(4\epsilon_{2})^{-1}+C/2, \\ \mathscr{B}_{2}(t) = B_{2}(t)+B_{3}(t)(\epsilon_{3})^{-1}&+C^{2}(4\epsilon_{4})^{-1}+B_{4}(2\epsilon_{5})^{-1},\\ \mathscr{B}_{3}(t) =B_{5}&(t)+C/2, \\ \mathscr{A}(t)=\mathscr{B}_{1}(t)+&\mathscr{B}_{2}(t)+\mathscr{B}_{3}(t)\\ \mathscr{X}(t)=(\tau^{2}+&\rho_{1}\zeta^{2}+\chi^{2}),\end{aligned}$$ yields: $$\begin{aligned}\frac{1}{2}\frac{d}{dt}\int_{\mathbb{R}}\mathscr{X}(t)dx + \mathscr{C}\int_{\mathbb{R}}|\partial_{x}\zeta|^{2} dx\leq \mathscr{A}(t)\left(\|\chi\|^{2}_{L^{2}(\mathbb{R})}+\|\zeta\|^{2}_{L^{2}(\mathbb{R})}+\|\tau\|^{2}_{L^{2}(\mathbb{R})}\right).\end{aligned}$$   Since proposition 4.1, proposition 4.2 and lemma 4.6 confirm by above that $\mathscr{A}(t)\in L^{2}(0,T)$, and as $\mathscr{C}$ is positive, then at $t=0$ since $$\int_{\mathbb{R}}\mathscr{X}(t_{0})dx=\int_{\mathbb{R}}\tau_{0}^{2}+\rho_{1|t=o}\zeta_{0}^{2}+\chi_{0}^{2}dx=0,$$ then Gronwall's lemma gives that $\int_{\mathbb{R}}\mathscr{X}(t)dx \equiv 0$ over $[0,T]$, which establishes that $\tau$, $\zeta$, and $\chi$ are each zero. \end{proof}


\begin{thebibliography}{00}

\bibitem{AS}
Y.~Amirat and V.~Shelukhin.
\newblock Global weak solutions to equations of compressible miscible flows in
  porous media.
\newblock {\em SIAM J. Math. Anal.}, 38(6):1825--1846 (electronic), 2007.

\bibitem{AK}
S.~N. Antontsev and A.~V. Kazhikhov.
\newblock {\em Matematicheskie voprosy dinamiki neodnorodnykh zhidkostei}.
\newblock Novosibirsk. Gosudarstv. Univ., Novosibirsk, 1973.
\newblock Lecture notes, Novosibirsk State University.

\bibitem{AKM}
S.~N. Antontsev, A.~V. Kazhikhov, and V.~N. Monakhov.
\newblock {\em Boundary value problems in mechanics of nonhomogeneous fluids},
  volume~22 of {\em Studies in Mathematics and its Applications}.
\newblock North-Holland Publishing Co., Amsterdam, 1990.
\newblock Translated from the Russian.

\bibitem{Binder}
K.~Binder.
\newblock Collective diffusion, nucleation and spinodal decomposition in
  polymer mixtures.
\newblock {\em Journal of Chemical Physics}, 79(12):6387--6409, 1983.

\bibitem{BD4}
D.~Bresch and B.~Desjardins.
\newblock Sur un mod\`ele de {S}aint-{V}enant visqueux et sa limite
  quasi-g\'eostrophique.
\newblock {\em C. R. Math. Acad. Sci. Paris}, 335(12):1079--1084, 2002.

\bibitem{BD0}
D.~Bresch and B.~Desjardins.
\newblock Some diffusive capillary models of korteweg type.
\newblock {\em C.~R.~Acad.~Sci.,~Paris,~Section~ M{\'e}canique},
  332(11):881--886, 2004.

\bibitem{BD1}
D.~Bresch and B.~Desjardins.
\newblock On compressible {N}avier-{S}tokes equations with density dependent
  viscosities in bounded domains.
\newblock {\em J. Math. Pures Appl. (9)}, 87(2):227--235, 2007.

\bibitem{BD3}
D.~Bresch and B.~Desjardins.
\newblock On the existence of global weak solutions to the {N}avier-{S}tokes
  equations for viscous compressible and heat conducting fluids.
\newblock {\em J. Math. Pures Appl. (9)}, 87(1):57--90, 2007.

\bibitem{BDL}
D.~Bresch, B.~Desjardins, and C-K. Lin.
\newblock On some compressible fluid models: {K}orteweg, lubrication, and
  shallow water systems.
\newblock {\em Comm. Partial Differential Equations}, 28(3-4):843--868, 2003.

\bibitem{BD2}
D.~Bresch, B.~Desjardins, and G.~M{\'e}tivier.
\newblock Recent mathematical results and open problems about shallow water
  equations.
\newblock In {\em Analysis and simulation of fluid dynamics}, Adv. Math. Fluid
  Mech., pages 15--31. Birkh\"auser, Basel, 2007.

\bibitem{BMR}
J.~Buajarern, L.~Mitchem, and J.~P. Reid.
\newblock Characterizing multiphase organic/inorganic/aqueous aerosol droplets.
\newblock {\em Journal of Physical Chemistry A}, 111(37):9054--9061, 2007.

\bibitem{BMR2}
M.~Bul{\'{\i}}{\v{c}}ek, J.~M{\'a}lek, and K.~R. Rajagopal.
\newblock Navier's slip and evolutionary {N}avier-{S}tokes-like systems with
  pressure and shear-rate dependent viscosity.
\newblock {\em Indiana Univ. Math. J.}, 56(1):51--85, 2007.

\bibitem{CH}
J.W. Cahn and J.E. Hilliard.
\newblock Free energy of a nonuniform system i. interfacial free energy.
\newblock {\em Journal of Chemical Physics}, 30:258--267, 1959.

\bibitem{CHT}
G-Q. Chen, D.~Hoff, and K.~Trivisa.
\newblock Global solutions of the compressible {N}avier-{S}tokes equations with
  large discontinuous initial data.
\newblock {\em Comm. Partial Differential Equations}, 25(11-12):2233--2257,
  2000.

\bibitem{CHT2}
G-Q. Chen, D.~Hoff, and K.~Trivisa.
\newblock Global solutions to a model for exothermically reacting, compressible
  flows with large discontinuous initial data.
\newblock {\em Arch. Ration. Mech. Anal.}, 166(4):321--358, 2003.

\bibitem{ChenK}
G-Q. Chen and M.~Kratka.
\newblock Global solutions to the {N}avier-{S}tokes equations for compressible
  heat-conducting flow with symmetry and free boundary.
\newblock {\em Comm. Partial Differential Equations}, 27(5-6):907--943, 2002.

\bibitem{CT}
G-Q. Chen and K.~Trivisa.
\newblock Analysis on models for exothermically reacting, compressible flows
  with large discontinous initial data.
\newblock In {\em Nonlinear partial differential equations and related
  analysis}, volume 371 of {\em Contemp. Math.}, pages 73--91. Amer. Math.
  Soc., Providence, RI, 2005.

\bibitem{CK}
Y.~Cho and H.~Kim.
\newblock Existence results for viscous polytropic fluids with vacuum.
\newblock {\em J. Differential Equations}, 228(2):377--411, 2006.

\bibitem{DEP}
B.~Das, G.~Enden, and A.S. Popel.
\newblock Stratified multiphase model for blood flow in a venular bifurcation.
\newblock {\em Annals of Biomedical Engineering}, 25(1):135--153, 1997.

\bibitem{DS}
E.J. Davis and G.~Schweiger.
\newblock {\em The Airborne Microparticle}.
\newblock Springer-Verlag, 2002.

\bibitem{BD5}
B.~Desjardins.
\newblock Regularity results for two-dimensional flows of multiphase viscous
  fluids.
\newblock {\em Arch. Rational Mech. Anal.}, 137(2):135--158, 1997.

\bibitem{DL}
R.~J. DiPerna and P.-L. Lions.
\newblock Ordinary differential equations, transport theory and {S}obolev
  spaces.
\newblock {\em Invent. Math.}, 98(3):511--547, 1989.

\bibitem{DT2}
D.~Donatelli and K.~Trivisa.
\newblock On the motion of a viscous compressible radiative-reacting gas.
\newblock {\em Comm. Math. Phys.}, 265(2):463--491, 2006.

\bibitem{DT}
D.~Donatelli and K.~Trivisa.
\newblock A multidimensional model for the combustion of compressible fluids.
\newblock {\em Arch. Ration. Mech. Anal.}, 185(3):379--408, 2007.

\bibitem{DN}
L.K. Doraiswamy and S.D. Naik.
\newblock Phase transfer catalysis: Chemistry and engineering.
\newblock {\em AICHE Journal}, 44(3):612--646, 1998.

\bibitem{DF}
B.~Ducomet and E.~Feireisl.
\newblock On the dynamics of gaseous stars.
\newblock {\em Arch. Ration. Mech. Anal.}, 174(2):221--266, 2004.

\bibitem{DF2}
B.~Ducomet and E.~Feireisl.
\newblock The equations of magnetohydrodynamics: on the interaction between
  matter and radiation in the evolution of gaseous stars.
\newblock {\em Comm. Math. Phys.}, 266(3):595--629, 2006.

\bibitem{DZ}
B.~Ducomet and A.~Zlotnik.
\newblock On the large-time behavior of 1{D} radiative and reactive viscous
  flows for higher-order kinetics.
\newblock {\em Nonlinear Anal.}, 63(8):1011--1033, 2005.

\bibitem{Dukowicz}
J.K. Dukowicz.
\newblock A particle-fluid numerical-model for liquid sprays.
\newblock {\em Journal of Computational Physics}, 35(2):229--253, 1980.

\bibitem{Eisenbach}
M.~Eisenbach.
\newblock {\em Chemotaxis}.
\newblock Imperial College Press, 2004.

\bibitem{Evans}
E.~Evans.
\newblock New physical concpets for cell ameboid motion.
\newblock {\em Biophysical Journal}, 64(4):1306--1322, 1993.

\bibitem{Faeth}
G.M. Faeth.
\newblock Evaporation and combustion of sprays.
\newblock {\em Progress in Energy and Combustion Science}, 9(1--2):1--76, 1983.

\bibitem{Faeth2}
G.M. Faeth.
\newblock Mixing, transport and combustion in sprays.
\newblock {\em Progress in Energy and Combustion Science}, 14(4):293--345,
  1987.

\bibitem{F1}
E.~Feireisl.
\newblock On the motion of a viscous, compressible, and heat conducting fluid.
\newblock {\em Indiana Univ. Math. J.}, 53(6):1705--1738, 2004.

\bibitem{F2}
E.~Feireisl.
\newblock Mathematics of viscous, compressible, and heat conducting fluids.
\newblock In {\em Nonlinear partial differential equations and related
  analysis}, volume 371 of {\em Contemp. Math.}, pages 133--151. Amer. Math.
  Soc., Providence, RI, 2005.

\bibitem{F3}
E.~Feireisl.
\newblock Mathematical theory of compressible, viscous, and heat conducting
  fluids.
\newblock {\em Comput. Math. Appl.}, 53(3-4):461--490, 2007.

\bibitem{F5}
E.~Feireisl, P.~Lauren{\c{c}}ot, and H.~Petzeltov{\'a}.
\newblock On convergence to equilibria for the {K}eller-{S}egel chemotaxis
  model.
\newblock {\em J. Differential Equations}, 236(2):551--569, 2007.

\bibitem{FFS}
M.~Feistauer, J.~Felcman, and I.~Stra{\v{s}}kraba.
\newblock {\em Mathematical and computational methods for compressible flow}.
\newblock Numerical mathematics and scientific computation. Oxford University
  Press, 2003.

\bibitem{FW}
J.~Fine and L.~Waite.
\newblock {\em Applied Biofluid Mechanics}.
\newblock The McGraw-Hill Companies, 2007.

\bibitem{FMR}
M.~Franta, J.~M{\'a}lek, and K.~R. Rajagopal.
\newblock On steady flows of fluids with pressure- and shear-dependent
  viscosities.
\newblock {\em Proc. R. Soc. Lond. Ser. A Math. Phys. Eng. Sci.},
  461(2055):651--670, 2005.

\bibitem{FTKH}
M.~Fujimoto, T.~Kado, W.~Takashima, K.~Kaneto, and S.~Hayase.
\newblock Dye-sensitized solar cells fabricated by electrospray coating using
  tio2 nanocrystal dispersion solution.
\newblock {\em Journal of the Electrochemical Society}, 153(5):A826--A829,
  2006.

\bibitem{Gardiner}
W.C. Gardiner.
\newblock {\em Combustion Chemistry}.
\newblock Springer-Verlag New York Inc., New York, NY, 1984.

\bibitem{HA}
F.H. Harlow and A.A. Amsden.
\newblock Numerical-calculation of multiphase fluid-flow.
\newblock {\em Journal of Computational Physics}, 17(1):19--52, 1975.

\bibitem{HG}
R.M. Harrison and R.E. van Grieken.
\newblock {\em Atmospheric Particles}, volume~5 of {\em IUPAC Series on
  Analytical and Physical Chemistry of Environmental Systems}.
\newblock John Wiley \& Sons, New York, NY, 1998.

\bibitem{HJ}
S.Y. Heriot and R.A.L. Jones.
\newblock An interfacial instability in a transient wetting layer leads to
  lateral phase separation in thin spin-cast polymer-blend films.
\newblock {\em Nature Materials}, 4(10):782--786, 2005.

\bibitem{Hirsch}
C.~Hirsch.
\newblock {\em Numerical computation of internal and external flows}, volume
  1--2 of {\em Wiley series in numerical methods in engineering}.
\newblock John Wiley \& Sons Ltd., Chichester [England], 1988.

\bibitem{Hoff1}
D.~Hoff.
\newblock Global existence for {$1$}{D}, compressible, isentropic
  {N}avier-{S}tokes equations with large initial data.
\newblock {\em Trans. Amer. Math. Soc.}, 303(1):169--181, 1987.

\bibitem{Hoff2}
D.~Hoff.
\newblock Strong convergence to global solutions for multidimensional flows of
  compressible, viscous fluids with polytropic equations of state and
  discontinuous initial data.
\newblock {\em Arch. Rational Mech. Anal.}, 132(1):1--14, 1995.

\bibitem{Hoff4}
D.~Hoff.
\newblock Global solutions of the equations of one-dimensional, compressible
  flow with large data and forces, and with differing end states.
\newblock {\em Z. Angew. Math. Phys.}, 49(5):774--785, 1998.

\bibitem{HS}
D.~Hoff and J.~Smoller.
\newblock Non-formation of vacuum states for compressible {N}avier-{S}tokes
  equations.
\newblock {\em Comm. Math. Phys.}, 216(2):255--276, 2001.

\bibitem{HT}
D.~Hoff and E.~Tsyganov.
\newblock Uniqueness and continuous dependence of weak solutions in
  compressible magnetohydrodynamics.
\newblock {\em Z. Angew. Math. Phys.}, 56(5):791--804, 2005.

\bibitem{Houston}
P.L. Houston.
\newblock {\em Chemical Kinetics and Reaction Dynamics}.
\newblock McGrw-Hill Higher Education, New York, NY, 2001.

\bibitem{HSWK}
J.H. Hunter, M.T. Sandford, R.W. Whitaker, and R.I. Klein.
\newblock Star formation in colliding gas-flows.
\newblock {\em Astrophysical Journal}, 305(1):309--332, 1986.

\bibitem{JLPH}
J.~Jung, R.W. Lyczkowski, C.~Panchal, and A.~Hassanein.
\newblock Multiphase hemodynamic simulation of pulsatile flow in a coronary
  artery.
\newblock {\em Journal of Biomechanics}, 39(11):2064--2073, 2006.

\bibitem{KS}
A.~V. Kazhikhov and V.~V. Shelukhin.
\newblock Unique global solution with respect to time of initial-boundary value
  problems for one-dimensional equations of a viscous gas.
\newblock {\em Prikl. Mat. Meh.}, 41(2):282--291, 1977.

\bibitem{Kazhikhov}
A.~V. Ka{\v{z}}ihov.
\newblock Solvability of the initial-boundary value problem for the equations
  of the motion of an inhomogeneous viscous incompressible fluid.
\newblock {\em Dokl. Akad. Nauk SSSR}, 216:1008--1010, 1974.

\bibitem{SEV}
Shui L., J.C.T. Eijkel, and A.~van~den Berg.
\newblock Multiphase flow in microfluidic systems - control and applications of
  droplets and interfaces.
\newblock {\em Advances in Colloid and Interface Science}, (133):35--49, 2007.

\bibitem{LSU}
O.~A. Ladyzenskaja, V.~A. Solonnikov, and N.~N. Ural$'$ceva.
\newblock {\em Lineinye i kvazilineinye uravneniya parabolicheskogo tipa}.
\newblock Izdat. ``Nauka'', Moscow, 1968.

\bibitem{LKBJS}
G.~Lemon, J.R. King, H.M. Byrne, O.E. Jensen, and K.M. Shakesheff.
\newblock Mathematical modelling of engineered tissue growth using a multiphase
  porous flow mixture theory.
\newblock {\em J. Math. Biol.}, 52(5):571--594, 2006.

\bibitem{PLL1}
P-L. Lions.
\newblock {\em Mathematical topics in fluid mechanics. {V}ol. 1}, volume~3 of
  {\em Oxford Lecture Series in Mathematics and its Applications}.
\newblock The Clarendon Press Oxford University Press, New York, 1996.
\newblock Incompressible models, Oxford Science Publications.

\bibitem{PLL2}
P-L. Lions.
\newblock {\em Mathematical topics in fluid mechanics. {V}ol. 2}, volume~10 of
  {\em Oxford Lecture Series in Mathematics and its Applications}.
\newblock The Clarendon Press Oxford University Press, New York, 1998.
\newblock Compressible models, Oxford Science Publications.

\bibitem{LWZ}
S.~Liu, F.~Wang, and H.~Zhao.
\newblock Global existence and asymptotics of solutions of the
  {C}ahn-{H}illiard equation.
\newblock {\em J. Differential Equations}, 238(2):426--469, 2007.

\bibitem{MR1}
J.~M{\'a}lek, G.~Mingione, and J.~Star{\'a}.
\newblock Fluids with pressure dependent viscosity: partial regularity of
  steady flows.
\newblock In {\em EQUADIFF 2003}, pages 380--385. World Sci. Publ., Hackensack,
  NJ, 2005.

\bibitem{MR2}
J.~M{\'a}lek and K.~R. Rajagopal.
\newblock Incompressible rate type fluids with pressure and shear-rate
  dependent material moduli.
\newblock {\em Nonlinear Anal. Real World Appl.}, 8(1):156--164, 2007.

\bibitem{MN1}
A.~Matsumura and T.~Nishida.
\newblock The initial value problem for the equations of motion of compressible
  viscous and heat-conductive fluids.
\newblock {\em Proc. Japan Acad. Ser. A Math. Sci.}, 55(9):337--342, 1979.

\bibitem{MN2}
A.~Matsumura and T.~Nishida.
\newblock The initial value problem for the equations of motion of viscous and
  heat-conductive gases.
\newblock {\em J. Math. Kyoto Univ.}, 20(1):67--104, 1980.

\bibitem{MN3}
A.~Matsumura and T.~Nishida.
\newblock Initial-boundary value problems for the equations of compressible
  viscous and heat-conductive fluid.
\newblock In {\em Nonlinear partial differential equations in applied science
  (Tokyo, 1982)}, volume~81 of {\em North-Holland Math. Stud.}, pages 153--170.
  North-Holland, Amsterdam, 1983.

\bibitem{Mayer}
A.B.R. Mayer.
\newblock Colloidal metal nanoparticles dispersed in amphiphilic polymers.
\newblock {\em Polymers for Advanced Technologies}, 12(1--2):96--106, 2001.

\bibitem{MV1}
A.~Mellet and A.~Vasseur.
\newblock Existence and uniqueness of global strong solutions for
  one-dimensional compressible navier-stokes equations.
\newblock {\em In Press}, 2007.

\bibitem{MV2}
A.~Mellet and A.~Vasseur.
\newblock Global weak solutions for a
  {V}lasov-{F}okker-{P}lanck/{N}avier-{S}tokes system of equations.
\newblock {\em Math. Models Methods Appl. Sci.}, 17(7):1039--1063, 2007.

\bibitem{MV3}
A.~Mellet and A.~Vasseur.
\newblock On the barotropic compressible {N}avier-{S}tokes equations.
\newblock {\em Comm. Partial Differential Equations}, 32(1-3):431--452, 2007.

\bibitem{MRFJ}
F.~Miniati, D.S. Ryu, A.~Ferrara, and T.W. Jones.
\newblock Magnetohydrodynamics of cloud collisions in a multiphase interstellar
  medium.
\newblock {\em Astrophysical Journal}, 510(2):726--746, 1979.

\bibitem{MS}
H.~Moehwald and D.G. Shchukin.
\newblock Sonochemical nanosynthesis at the engineered interface of a
  cavitation microbubble.
\newblock {\em Physical Chemistry Chemical Physics}, 8(30):3496--3506, 2006.

\bibitem{NP1}
A.~Nouri and F.~Poupaud.
\newblock An existence theorem for the multifluid {N}avier-{S}tokes problem.
\newblock {\em J. Differential Equations}, 122(1):71--88, 1995.

\bibitem{NP2}
A.~Nouri, F.~Poupaud, and Y.~Demay.
\newblock An existence theorem for the multi-fluid {S}tokes problem.
\newblock {\em Quart. Appl. Math.}, 55(3):421--435, 1997.

\bibitem{NS}
A.~Novotn{\'y} and I.~Stra{\v{s}}kraba.
\newblock {\em Introduction to the mathematical theory of compressible flow},
  volume~27 of {\em Oxford Lecture Series in Mathematics and its Applications}.
\newblock Oxford University Press, Oxford, 2004.

\bibitem{OCENS}
T.E. Ongaro, C.~Cavazzoni, G.~Erbacci, A.~Neri, and M.~V. Salvetti.
\newblock A parallel multiphase flow code for the 3d simulation of explosive
  volcanic eruptions.
\newblock {\em Parallel Computing}, 33(7--8):541--560, 2007.

\bibitem{OD}
S.~O'Sullivan and T.P. Downes.
\newblock An explicit scheme for multifluid magnetohydrodynamics.
\newblock {\em Monthly Notices of the Royal Astronomical Society},
  366(4):1329--1336, 2006.

\bibitem{PL}
Shih-I Pai and Shijun Luo.
\newblock {\em Theoretical and computational dynamics of a compressible flow}.
\newblock Beijing: Science Press, New York, NY, 1991.

\bibitem{PW}
U.~Pasaogullari and C.Y. Wang.
\newblock Liquid water transport in gas diffusion layer of polymer electrolyte
  fuel cells.
\newblock {\em Journal of the Electrochemical Society}, 151(3):A399--A406,
  2004.

\bibitem{Pedlosky}
J.~Pedlosky.
\newblock {\em Geophysical fluid dynamics}, volume 2nd Edition.
\newblock Springer-Verlag New York Inc., New York, NY, 1987.

\bibitem{PSW}
K.~Promislow, J.~Stockie, and B.~Wetton.
\newblock A sharp interface reduction for multiphase transport in a porous fuel
  cell electrode.
\newblock {\em Proc. R. Soc. Lond. Ser. A Math. Phys. Eng. Sci.},
  462(2067):789--816, 2006.

\bibitem{SIG}
Safran S., T.~Kuhl, J.~Israelachvili, and G.~Hed.
\newblock Polymer induced membrane contraction, phase separation, and fusion
  via marangoni flow.
\newblock {\em Biophysical Journal}, 81(2):659--666, 2001.

\bibitem{Serre2}
D.~Serre.
\newblock Solutions faibles globales des \'equations de {N}avier-{S}tokes pour
  un fluide compressible.
\newblock {\em C. R. Acad. Sci. Paris S\'er. I Math.}, 303(13):639--642, 1986.

\bibitem{Serre1}
D.~Serre.
\newblock Sur l'\'equation monodimensionnelle d'un fluide visqueux,
  compressible et conducteur de chaleur.
\newblock {\em C. R. Acad. Sci. Paris S\'er. I Math.}, 303(14):703--706, 1986.

\bibitem{Sh1}
V.~V. Shelukhin.
\newblock Motion with a contact discontinuity in a viscous heat conducting gas.
\newblock {\em Dinamika Sploshn. Sredy}, (57):131--152, 1982.

\bibitem{Sh2}
V.~V. Shelukhin.
\newblock Evolution of a contact discontinuity in the barotropic flow of a
  viscous gas.
\newblock {\em Prikl. Mat. Mekh.}, 47(5):870--872, 1983.

\bibitem{Sh4}
V.~V. Shelukhin.
\newblock Boundary value problems for equations of a barotropic viscous gas
  with nonnegative initial density.
\newblock {\em Dinamika Sploshn. Sredy}, (74):108--125, 162--163, 1986.

\bibitem{S1}
V.~A. Solonnikov.
\newblock The solvability of the initial-boundary value problem for the
  equations of motion of a viscous compressible fluid.
\newblock {\em Zap. Nau\v cn. Sem. Leningrad. Otdel. Mat. Inst. Steklov.
  (LOMI)}, 56:128--142, 197, 1976.
\newblock Investigations on linear operators and theory of functions, VI.

\bibitem{S2}
V.~A. Solonnikov.
\newblock Unsteady flow of a finite mass of a fluid bounded by a free surface.
\newblock {\em Zap. Nauchn. Sem. Leningrad. Otdel. Mat. Inst. Steklov. (LOMI)},
  152(Kraev. Zadachi Mat. Fiz. i Smezhnye Vopr. Teor. Funktsii18):137--157,
  183--184, 1986.

\bibitem{S3}
V.~A. Solonnikov.
\newblock On a nonstationary motion of a finite mass of a liquid bounded by a
  free surface.
\newblock In {\em Differential equations (Xanthi, 1987)}, volume 118 of {\em
  Lecture Notes in Pure and Appl. Math.}, pages 647--653. Dekker, New York,
  1989.

\bibitem{S4}
V.~A. Solonnikov.
\newblock Unsteady motions of a finite isolated mass of a self-gravitating
  fluid.
\newblock {\em Algebra i Analiz}, 1(1):207--249, 1989.

\bibitem{S5}
V.~A. Solonnikov.
\newblock Unsteady motions of a finite isolated mass of a self-gravitating
  fluid.
\newblock {\em Algebra i Analiz}, 1(1):207--249, 1989.

\bibitem{ST2}
V.~A. Solonnikov and A.~Tani.
\newblock A problem with a free boundary for {N}avier-{S}tokes equations for a
  compressible fluid in the presence of surface tension.
\newblock {\em Zap. Nauchn. Sem. Leningrad. Otdel. Mat. Inst. Steklov. (LOMI)},
  182(Kraev. Zadachi Mat. Fiz. i Smezh. Voprosy Teor. Funktsii. 21):142--148,
  173--174, 1990.

\bibitem{ST}
V.~A. Solonnikov and A.~Tani.
\newblock Evolution free boundary problem for equations of motion of viscous
  compressible barotropic liquid.
\newblock In {\em The Navier-Stokes equations II---theory and numerical methods
  (Oberwolfach, 1991)}, volume 1530 of {\em Lecture Notes in Math.}, pages
  30--55. Springer, Berlin, 1992.

\bibitem{T1}
N.~Tanaka.
\newblock Global existence of two phase nonhomogeneous viscous incompressible
  fluid flow.
\newblock {\em Comm. Partial Differential Equations}, 18(1-2):41--81, 1993.

\bibitem{T2}
A.~Tani and N.~Tanaka.
\newblock Large-time existence of surface waves in incompressible viscous
  fluids with or without surface tension.
\newblock {\em Arch. Rational Mech. Anal.}, 130(4):303--314, 1995.

\bibitem{TGS}
P.~Tartaj, T.~Gonzalez-Carreno, and C.J. Serna.
\newblock Magnetic behavior of gamma-fe2o3 nanocrystals dispersed in colloidal
  silica particles.
\newblock {\em Journal of Physical Chemistry B}, 107(1):20--24, 1993.

\bibitem{KT}
K.~Trivisa.
\newblock Global existence and asymptotic analysis of solutions to a model for
  the dynamic combustion of compressible fluids.
\newblock {\em Discrete Contin. Dyn. Syst.}, (suppl.):852--863, 2003.
\newblock Dynamical systems and differential equations (Wilmington, NC, 2002).

\bibitem{VZ}
A.~Valli and W.M. Zajaczkowski.
\newblock Navier-{S}tokes equations for compressible fluids: global existence
  and qualitative properties of the solutions in the general case.
\newblock {\em Comm. Math. Phys.}, 103(2):259--296, 1986.

\bibitem{Vallis}
G.~Vallis.
\newblock {\em Atmospheric and oceanic fluid dynamics : fundamentals and
  large-scale circulation}, volume 2nd Edition.
\newblock Cambridge University Press, New York, NY, 2006.

\bibitem{Williams}
F.A. Williams.
\newblock {\em Combustion Theory}, volume Second Edition of {\em Combustion
  Science and Engineering Series}.
\newblock The Benjamin/Cummings Publishing Company, Inc., Menlo Park,
  California, 1985.

\bibitem{WMW}
C.Y. Wong, J.A. Maruhn, and T.A. Welton.
\newblock Dynamics of nuclear fluids. i. foundations.
\newblock {\em Nucl. Phys.}, A253:469--489, 1975.

\bibitem{Wyatt}
R.E. Wyatt.
\newblock {\em Quantum dynamics with trajectories}, volume~28 of {\em
  Interdisciplinary Applied Mathematics}.
\newblock Springer-Verlag, New York, 2005.
\newblock Introduction to quantum hydrodynamics, With contributions by Corey J.
  Trahan.

\bibitem{YZ}
T.~Yang and C.~Zhu.
\newblock Compressible {N}avier-{S}tokes equations with degenerate viscosity
  coefficient and vacuum.
\newblock {\em Comm. Math. Phys.}, 230(2):329--363, 2002.

\bibitem{Youngs}
D.L. Youngs.
\newblock Numerical-simulation of turbulent mixing by rayleigh-taylor
  instability.
\newblock {\em Physica D}, 12(1--3):32--44, 1984.

\bibitem{Z1}
A.~A. Zlotnik.
\newblock Weak solutions of the equations of motion of a viscous compressible
  reacting binary mixture: uniqueness and {L}ipschitz-continuous dependence on
  data.
\newblock {\em Mat. Zametki}, 75(2):307--311, 2004.

\bibitem{ZD}
A.~A. Zlotnik and B.~Dyukome.
\newblock Stabilization of one-dimensional flows of a radiative and a reactive
  viscous gas for a general rate of reaction.
\newblock {\em Dokl. Akad. Nauk}, 403(6):731--736, 2005.

\end{thebibliography}
\end{document}